\title{Characterizations and approximability of hard counting classes below \sP}
\author{Eleni Bakali \and Aggeliki Chalki \and Aris Pagourtzis}
\authorrunning{E. Bakali, A. Chalki, A. Pagourtzis}
\institute{School  of Electrical and Computer Engineering,\\
National Technical University of Athens, 15780 Athens, Greece\\
\email{mpakali@corelab.ntua.gr, achalki@corelab.ntua.gr, pagour@cs.ntua.gr}}
\renewenvironment{proof}{\medskip\noindent\textit{Proof}. }{\hfill $\Box$}
\newcommand{\totp}{\ensuremath{\mathsf{TotP}}}
\newcommand{\sP}{\ensuremath{\#\mathsf{P}}}
\newcommand{\sPE}{\ensuremath{\#\mathsf{PE}}}
\newcommand{\cP}{\ensuremath{\mathsf{P}}}
\newcommand{\NP}{\ensuremath{\mathsf{NP}}}
\newcommand{\RP}{\ensuremath{\mathsf{RP}}}
\newcommand{\RPo}{\ensuremath{\#\mathsf{RP}_1}}
\newcommand{\RPt}{\ensuremath{\#\mathsf{RP}_2}}
\newcommand{\FP}{\ensuremath{\mathsf{FP}}}
\newcommand{\FPRASP}{\ensuremath{\mathsf{FPRAS}'}}
\newcommand{\FPRAS}{\ensuremath{\mathsf{FPRAS}}}
\newcommand{\sBPP}{\ensuremath{\#\mathsf{BPP}}}
\newcommand{\BPP}{\ensuremath{\mathsf{BPP}}}
\newcommand{\sSAT}{\textsc{\#Sat}}
\newcommand{\dnf}{\textsc{\#Dnf}}
\begin{document}

\maketitle

\begin{abstract}
An important objective of research in counting complexity is to understand which counting problems are approximable. In this quest, the complexity class \totp, a hard subclass of \sP, is of key importance, as it contains self-reducible counting problems with easy decision version, thus eligible to be approximable. Indeed, most problems known so far to admit an fpras fall into this class.
  
An open question raised recently by the community of descriptive complexity is to find a logical characterization of \totp\ and of \emph{robust} subclasses of \totp.\ In this work we define two subclasses of \totp, in terms of descriptive complexity, both of  which are robust in the sense that they have natural complete  problems, which are defined in terms of satisfiability of Boolean formulae. 
  
We then explore the relationship between the class of approximable counting problems and \totp.
 We prove that $\totp\nsubseteq\FPRAS$ if and only if $\NP\neq\RP$ and $\FPRAS\nsubseteq\totp$ unless \RP=\cP. To this end we introduce two ancillary classes that can both be seen as counting versions of \RP. We further show that
 \FPRAS\ lies between one of these classes and a counting version of \BPP.
 
 Finally, we provide a complete picture of inclusions among all the classes defined or discussed in this paper with respect to  different conjectures about the \NP\ vs. \RP\ vs. \cP\ questions.

\end{abstract}
\section{Introduction}

The class \sP ~\cite{Valiant79} is the class of functions that count the number of solutions to problems in \NP,\ e.g. \textsc{\#Sat} is the function that on input a formula $\phi$ returns the number of satisfying assignments of $\phi.$ Equivalently, functions in \sP\ count accepting paths of non-deterministic polynomial time Turing machines (NPTMs).

 \NP-complete problems are hard to count, but it is not the case that problems in \cP\ are easy to count as well. When we consider counting, non-trivial facts hold. First of all there exist \sP-complete problems, that have decision version in \cP,\ e.g. \dnf. Moreover, some of them can be approximated, e.g. the Permanent~\cite{JS96perm} and \dnf ~\cite{KLM89}, while others cannot, e.g. \textsc{\#Is}~\cite{DFJ02}. 
  The class of problems in \sP\ with decision version in \cP\ is called \sPE, and a subclass of \sPE\ is \totp, which contains all self-reducible problems in \sPE ~\cite{PZ06}. Their significance will be apparent in what follows.
 
 Since many counting problems cannot be exactly computed in polynomial time, the interest of the community has  turned to the complexity of approximating them. On one side, there is an enormous literature on approximation algorithms and inapproximability results for individual problems in \sP ~\cite{DFJ02,GGJ17,JS96perm,KLM89,Valiant79}.  On the other hand, there have been attempts to classify counting problems with respect to their approximability~\cite{Arenas19,Arenas,DGGJ04, Saluja}.

\subsubsection{Related work.} From a unifying point of view, the most important results regarding approximability are the following. 
Every function in \sP\ either admits an fpras, or does not admit any polynomial approximation ratio~\cite{JS89}; we will therefore call the latter \emph{inapproximable}.
For self-reducible problems in \sP,\ fpras is equivalent to almost uniform sampling~\cite{JS89}.
With respect to approximation preserving reductions, there are three main classes of functions in \sP ~\cite{DGGJ04}: (a) the class of functions  that admit an fpras, (b) the class of functions that are interreducible with \sSAT, and  (c) the class $\#\mathsf{RH\Pi_1}$ of problems that are interreducible with \textsc{\#Bis}. Problems in the second class do not admit an fpras unless \NP=\RP,\ while the approximability status of problems in the third class is unknown and the conjecture is that they are neither interreducible with \sSAT, nor they admit an fpras.
We will denote \FPRAS\ the class of \sP\ problems that admit an fpras. 

Several works have attempted to provide a structural characterization that exactly captures \FPRAS,  in terms of path counting~\cite{BCPPZ17,PZ06}, interval size functions~\cite{BGPT17}, or  descriptive complexity~\cite{Arenas}.
Since counting problems with \NP-complete decision version are inapproximable unless $\NP=\RP$~\cite{DGGJ04}, those that admit fpras should be found among those with easy decision version (i.e., in \BPP\ or even in \cP). Even more specifically, in search of a logical characterization of a class that exactly captures \FPRAS,\ Arenas \textsl{et al.}~\cite{Arenas} show that subclasses of \FPRAS\ are contained in \totp,\ and they implicitly propose to study subclasses of \totp\ with certain additional properties in order to come up with approximable problems. Notably, most problems proven so far to admit an fpras belong to \totp,\ and several counting complexity classes proven to admit an fpras, namely $\#\mathsf{\Sigma_1}$, $\#\mathsf{R\Sigma_2}$~\cite{Saluja}, $\mathsf{\Sigma QSO(\Sigma_1)}$, $\mathsf{\Sigma QSO(\Sigma_1[FO])}$~\cite{Arenas} and $\mathsf{spanL}$~\cite{Arenas19}, are subclasses of \totp.

 Counting problems in \sP\ have also been studied in terms of descriptive complexity~\cite{Arenas,Bulatov,Dalmau,DGGJ04,Saluja}. Arenas \textsl{et al.}~\cite{Arenas} raised the question of defining  classes in terms of descriptive complexity that capture either \totp \  or robust subclasses of \totp, \ as one of the most important open questions in the area. A robust class of counting problems needs either to have a natural complete problem or to be closed under addition, multiplication and subtraction by one~\cite{Arenas}. In particular, \totp \ satisfies both of the above properties \cite{Arenas,BCPPZ17}. 

\subsubsection{Our contribution.} In the first part of the paper we focus on the exploration of the structure of \sP\ through descriptive complexity. In particular, we define two subclasses of \totp, namely $\mathsf{\Sigma QSO(\Sigma_2\text{-}2SAT)}$ and $\#\mathsf{\Pi_2\text{-}1VAR}$, via logical characterizations; for both these classes we show robustness by providing natural complete problems for them. Namely, we prove that the problem \textsc{\#Disj2Sat} of computing the number of satisfying assignments to disjunctions of 2SAT formulae is complete for $\mathsf{\Sigma QSO(\Sigma_2\text{-}2SAT)}$ under parsimonious reductions. This reveals that problems hard for $\mathsf{\Sigma QSO(\Sigma_2\text{-}2SAT)}$ under parsimonious reductions cannot admit an fpras unless $\NP=\RP$. We also prove that \textsc{\#MonotoneSat} is complete for $\#\mathsf{\Pi_2\text{-}1VAR}$ under product reductions. Our result is the first completeness result for \textsc{\#MonotoneSat} under reductions stronger than Turing. Notably, the complexity of \textsc{\#MonotoneSat} has been investigated in~\cite{HHKW07,BGPT17} and it is still open whether it is complete for \totp, or for a subclass of \totp\ under reductions for which the class is downwards closed. Although, $\#\mathsf{\Pi_2\text{-}1VAR}$ is not known to be downwards closed under product reductions, our result is a step towards understanding the exact complexity of \textsc{\#MonotoneSat}. 

\begin{wrapfigure}{r} {0.4\textwidth}
	\vspace{-20pt}
	\centering
\begin{tikzpicture}
\node at (1,3) {\sP};
\draw [thick, -> ] (1,2.3) -- (1,2.7);
\node at (1,2) {\sBPP};
\draw [thick, ->] (0.5,1.3) -- (0.9,1.7);
\node at (0,1) {\sPE};
\draw [thick, ->] (2,0.4) -- (1.3,1.7);
\node at (2,0) {\FPRAS};
\node at (0,0) {\totp};
\draw [thick, ->] (0,0.3) -- (0,0.7);
\draw [thick, <-] (0.5,0) -- (1.3,0) node[midway,above] {?};
\end{tikzpicture} 
\caption{Relation of \FPRAS\ to counting classes below \sP.}
	\vspace{-20pt}
	\label{fig:conjecture}
\end{wrapfigure}

In the second part  of this paper we examine the relationship between the class \totp\ and \FPRAS.\ 
As we already mentioned, most (if not all) problems proven so far to admit fpras belong to \totp, so we would like to examine whether 
$\FPRAS\ \subseteq \totp$. Of course, problems in \FPRAS\ have decision version in \BPP~\cite{gill}, so if we assume $\cP\neq \BPP$ this is probably not the case. Therefore, a  more realistic goal is to determine assumptions under which the conjecture $\FPRAS \subseteq \totp$ might be true.
The world so far is depicted in Figure~\ref{fig:conjecture}, where \sBPP\ denotes the class of problems in \sP\ with decision version in \BPP.

 In this work we refine this picture by proving that (a) $\FPRAS \nsubseteq \totp$ 
 unless \RP=\cP, which means that proving $\FPRAS \subseteq \totp$ would be at least as hard as proving $\RP = \cP$, (b) $\totp\ \nsubseteq \FPRAS$ if and only if $\NP\neq \RP$, (c)
 \FPRAS\ lies between two classes that can be seen as counting versions of \RP\ and \BPP,
 and (d) \FPRASP, which is the subclass of \FPRAS\ with zero error probability when the function value is zero, lies between two classes that we introduce here, that can both be seen as counting versions of \RP, and which surprisingly do not coincide unless \RP=\NP. Finally,  we give a complete picture of inclusions among all the classes defined or discussed in this paper with respect to  different conjectures about the \NP\ vs. \RP\ vs. \cP\ questions.

\section{Two robust subclasses of \totp}\label{descriptive section}

In this section we give the logical characterization of two robust subclasses of \totp. Each one of them has a natural complete problem. Two kinds of reductions will be used for the completeness results; parsimonious and product reductions. Note that both of them preserve approximations of multiplicative error \cite{DGGJ04, Saluja}. 

\begin{definition}
Let $f$, $g:\Sigma ^*\rightarrow \mathbb{N}$ be two counting functions.

(a) We say that there is a parsimonious (or Karp) reduction from $f$ to $g$, symb.\ $f \le_m^p g$,  if there is a polynomial-time computable function $h$, such that for every $x\in\Sigma^*$ it holds that $f(x) = g(h(x))$.

(b) We say that there is a product reduction from $f$ to $g$, symb.\ $f\le_{pr} g$, if there are polynomial-time computable functions $h_1,h_2$ such that for every $x\in\Sigma^*$ it holds that $f(x) = g(h_1(x))\cdot h_2(x)$.
\end{definition}

The formal definitions of the classes \sP, \FP, \sPE\ and \totp\ follow.

\begin{definition}
(a) \cite{Valiant79}
\sP\ is the class of functions $f$ for which there exists a polynomial-time decidable binary relation $R$ and a polynomial $p$  such that for all $x\in\Sigma^*$,  $f(x)=\big|\{y \in \{0,1\}^* \mid  |y|=p(|x|) \text{ and }  R(x,y)\}\big|$. 

Equivalently, $\sP=\{acc_M:\Sigma^*\rightarrow \mathbb{N} \ | \  M\text{ is an NPTM}\}$.

(b) \FP\ is the class of functions in \sP\ that are computable in polynomial time. 

(c) \cite{PZ06}
	$\sPE=\{f: \Sigma^*\rightarrow \mathbb{N} \ | \ f\in\sP \text{ and } L_f \in \cP \}$, where $L_f=\{x\in \Sigma^* \mid f(x)>0\}$ is the decision version of the function $f$.
	
(d) \cite{PZ06}
$\totp =\{tot_M:\Sigma^*\rightarrow \mathbb{N} \ | \  M\text{ is an NPTM}\}$, where 
$tot_M(x)= \#($all computation paths of $M$ on input $x)-1$.
\end{definition} 

\subsection{The class $\mathsf{\Sigma QSO(\Sigma_2\text{-}2SAT)}$}\label{sub2.1}

In order to define the first class we make use of the framework of Quantitative Second-Order Logics (QSO) defined in~\cite{Arenas}. 

Given a relational vocabulary $\sigma$, 
the set of First-Order logic formulae over $\sigma$ is given by the grammar:
$$\phi:= x=y \, | \, R(\overrightarrow{x}) \, | \, \neg\phi \, | \, \phi\vee\phi \, | \, \exists x\phi \, | \, \top \, | \perp$$
where $x,y$ are first-order variables, $R\in\sigma$, $\overrightarrow{x}$ is a tuple of first order variables, $\top$ represents a tautology, and $\perp$ represents the negation of a tautology.

We define a literal to be either of the form $X(\overrightarrow{x})$ or $\neg X(\overrightarrow{x})$, where $X$ is a second-order variable and $\overrightarrow{x}$ is a tuple of first-order variables. A 2SAT clause over $\sigma$ is a formula of the form $\phi_1\vee\phi_2\vee\phi_3$, where each of the $\phi_i$'s, $1\leq i\leq 3$, can be either a literal or a first-order formula over $\sigma$. In addition, at least one of them is a first-order formula. 
The set of $\Sigma_2$-2SAT formulae over $\sigma$ are given by:
$$\psi:=\exists \overrightarrow{x}\forall \overrightarrow{y} \bigwedge_{j=1}^k C_j(\overrightarrow{x},\overrightarrow{y})$$
where $\overrightarrow{x},\overrightarrow{y}$ are tuples of first-order variables, $k\in\mathbb{N}$ and $C_j$ are 2SAT clauses for every $1\leq j\leq k$.

The set of $\mathit{\Sigma QSO(\Sigma_2\text{-}2SAT)}$ formulae over $\sigma$ is given by the following grammar:
$$\alpha:=\phi\, | \, s \, |  \, (\alpha+\alpha) \, | \, \Sigma x. \alpha \, | \, \Sigma X.\alpha$$
where $\phi$ is a $\Sigma_2$-2SAT formula, $s\in\mathbb{N}$, $x$ is a first-order variable and $X$ is a second-order variable. The syntax of $\mathit{\Sigma QSO(\Sigma_2\text{-}2SAT)}$ formulae includes the counting operators of addition $+$, $\Sigma x$, $\Sigma X$. Specifically, $\Sigma x$, $\Sigma X$ are called first-order and second-order quantitative quantifiers respectively.

\begin{table}
\centering
$[[\phi]](\mathcal{A},v,V)=\begin{cases}
1, \text{ if } \mathcal{A}\models\phi\\
0, \text{ otherwise}
\end{cases}$
\hspace{7mm}
$[[s]](\mathcal{A},v,V)=s$

\vspace{2mm}

$[[\alpha_1+\alpha_2]](\mathcal{A},v,V)=[[\alpha_1]](\mathcal{A},v,V)+[[\alpha_2]](\mathcal{A},v,V)$

\vspace{2mm}

$\displaystyle[[\Sigma x. \alpha]](\mathcal{A},v,V)=\sum_{a\in A}\,[[\alpha]](\mathcal{A},v[a/x],V)$

\vspace{2mm}

$\displaystyle [[\Sigma X. \alpha]](\mathcal{A},v,V)=\sum_{B\subseteq A^{arity(X)}}[[\alpha]](\mathcal{A},v,V[B/X])$
\vspace{2mm}
\caption{The semantics of $\mathit{\Sigma QSO(\Sigma_2\text{-}2SAT)}$ formulae}
\label{semantics}
\end{table}

Let $\sigma$ be a relational vocabulary, $\mathcal{A}$ a $\sigma$-structure with universe $A$, $v$ a first-order assignment for $\mathcal{A}$ and $V$ a second-order assignment for $\mathcal{A}$. Then the evaluation of a $\mathit{\Sigma QSO(\Sigma_2\text{-}2SAT)}$ formula $\alpha$ over $(\mathcal{A}, V, v)$ is defined as a function $[[\alpha]]$ that on input $(\mathcal{A}, V, v)$ returns a number in $\mathbb{N}$. The function $[[\alpha]]$ is recursively defined in Table \ref{semantics}. A $\mathit{\Sigma QSO(\Sigma_2\text{-}2SAT)}$ formula
$\alpha$ is said to be a sentence if it does not have any free variable, that is, every variable in $\alpha$ is under the scope of a usual quantifier ($\exists$, $\forall$) or a quantitative quantifier. It is important
to notice that if $\alpha$ is a $\mathit{\Sigma QSO(\Sigma_2\text{-}2SAT)}$ sentence over a vocabulary $\sigma$, then for every $\sigma$-structure $\mathcal{A}$,
first-order assignments $v_1$, $v_2$ for $\mathcal{A}$ and second-order assignments $V_1$, $V_2$ for $\mathcal{A}$, it holds
that $[[\alpha]](\mathcal{A}, v_1, V_1) = [[\alpha]](\mathcal{A}, v_2, V_2)$. Thus, in such a case we use the term $[[\alpha]](\mathcal{A})$ to denote $[[\alpha]](\mathcal{A}, v, V)$ for some arbitrary first-order assignment $v$ and some arbitrary second-order assignment $V$ for $\mathcal{A}$.

At this point it is clear that for any $\mathit{\Sigma QSO(\Sigma_2\text{-}2SAT)}$ formula  $\alpha$, a function $[[\alpha]]$ is defined. In the rest of the paper we will use the same notation, namely $\mathsf{\Sigma QSO(\Sigma_2\text{-}2SAT)}$, both for  the set of formulae and the set of corresponding counting functions.\footnote{Moreover, we will
use the terms `(counting) problem' and `(counting) function' interchangeably throughout the paper.
}

The following inclusion holds between the class $\mathsf{\#RH\Pi_1}$~\cite{DGGJ04} and the class $\mathsf{\Sigma QSO(\Sigma_2\text{-}2SAT)}$ defined presently.

\begin{proposition}\label{prop1}
$\mathsf{\#RH\Pi_1}\subseteq \mathsf{\Sigma QSO(\Sigma_2\text{-}2SAT)}$
 \end{proposition}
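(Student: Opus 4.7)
The plan is a direct syntactic rewriting: given any $\mathsf{\#RH\Pi_1}$ specification of $f$, I will produce an equivalent $\mathsf{\Sigma QSO(\Sigma_2\text{-}2SAT)}$ sentence computing the same function. Recall that $f\in\mathsf{\#RH\Pi_1}$ means there exist a vocabulary $\sigma$, second-order variables $X_1,\ldots,X_k$, and a quantifier-free CNF $\bigwedge_{j=1}^{n} C_j(\overrightarrow{x},\overrightarrow{X})$ in which each clause $C_j$ contains \emph{at most one positive} and \emph{at most one negative} second-order literal, such that $f(\mathcal{A})=|\{\overrightarrow{S}:\mathcal{A}\models\forall\overrightarrow{x}\bigwedge_j C_j(\overrightarrow{x},\overrightarrow{S})\}|$ for every $\sigma$-structure $\mathcal{A}$.

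First, I would reshape each $C_j$ so as to fit the three-disjunct template of a 2SAT clause. Let $L_j^{+}$ denote the positive second-order literal of $C_j$ (take $L_j^{+}:=\perp$ if none exists), $L_j^{-}$ the negative one ($\perp$ otherwise), and $\phi_j$ the disjunction of all first-order literals of $C_j$ ($\perp$ if there are none). Then $C_j':=L_j^{+}\vee L_j^{-}\vee\phi_j$ is logically equivalent to $C_j$, has exactly three disjuncts, at most two of which are second-order literals, and contains at least one first-order disjunct, since $\perp$ belongs to the first-order grammar given in the paper. Hence $C_j'$ is a 2SAT clause in the sense of the paper.

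Second, since the $\Sigma_2$-2SAT grammar forces a prefix of the form $\exists\overrightarrow{x}\forall\overrightarrow{y}$, I would pad with a vacuous existential quantifier: set $\psi(\overrightarrow{X}):=\exists z\,\forall\overrightarrow{x}\bigwedge_{j=1}^{n} C_j'(\overrightarrow{x},\overrightarrow{X})$, where $z$ is a fresh variable not occurring in any $C_j'$. Over the standard nonempty universes, $\mathcal{A}\models\psi(\overrightarrow{S})$ iff $\mathcal{A}\models\forall\overrightarrow{x}\bigwedge_j C_j(\overrightarrow{x},\overrightarrow{S})$, so $\psi$ is a $\Sigma_2$-2SAT formula whose set of satisfying second-order assignments is exactly the one counted by $f$.

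Finally, define the sentence $\alpha:=\Sigma X_1.\,\Sigma X_2.\cdots\Sigma X_k.\,\psi$, which lies in $\mathsf{\Sigma QSO(\Sigma_2\text{-}2SAT)}$ by construction. Unfolding the semantics in Table~\ref{semantics} gives $[[\alpha]](\mathcal{A})=\sum_{S_1,\ldots,S_k}[[\psi]](\mathcal{A},v,V[\overrightarrow{S}/\overrightarrow{X}])=f(\mathcal{A})$, which proves the desired inclusion. The only obstacle is purely syntactic: fitting each $\mathsf{\#RH\Pi_1}$ clause into the rigid three-slot template with at least one first-order disjunct, and reconciling the $\Pi_1$ quantifier prefix with the $\Sigma_2$ prefix demanded by the target class. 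Both issues are handled by $\perp$-padding and a dummy bound variable, respectively, with no deeper technical content involved.
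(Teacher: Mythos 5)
Your proposal is correct and follows essentially the same route as the paper's proof --- rewrite the $\mathsf{\#RH\Pi_1}$ specification as a quantitative sum over a $\Sigma_2$-2SAT sentence --- and in fact you are more explicit than the paper about the two syntactic frictions (fitting each restricted-Horn clause into the three-slot 2SAT-clause template via $\perp$-padding, and supplying the vacuous $\exists$ prefix), which the paper dismisses with the single assertion that ``the Restricted-Horn formula $\psi$ is also a 2SAT formula.'' The one inaccuracy is in your recollection of the definition of $\mathsf{\#RH\Pi_1}$: in the paper (following Dyer et al.) the counted configurations are pairs $\langle\overrightarrow{X},\overrightarrow{x}\rangle$ of second-order relations \emph{and} free first-order elements, not just the relations $\overrightarrow{S}$; this is repaired immediately by prepending the first-order quantitative quantifiers $\Sigma\overrightarrow{x}$ to your sentence $\alpha$, exactly as the paper does with $\Sigma\overrightarrow{X}.\Sigma\overrightarrow{x}.\forall\overrightarrow{y}\,\psi$.
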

 \begin{proof}
A function $f$ is in the class $\#\mathsf{RH\Pi_1}$ if it can be expressed in the form 
$f(\mathcal{A})=|\{\langle \overrightarrow{X},\overrightarrow{x} \rangle: \mathcal{A}\models \forall \overrightarrow{y}\psi(\overrightarrow{y},\overrightarrow{x},\overrightarrow{X})\}|$, where $\psi$ is an unquantified CNF formula in which each clause has at most one occurrence of an unnegated variable from $\overrightarrow{X}$, and at most one occurrence of a negated variable from $\overrightarrow{X}$. Alternatively, the function $f$ can be expressed in the form $[[\Sigma\overrightarrow{X}.\Sigma\overrightarrow{x}.\forall\overrightarrow{y}\psi(\overrightarrow{y},\overrightarrow{x},\overrightarrow{X})]](\mathcal{A}).$
The Restricted-Horn formula $\psi$ is also a 2SAT formula. 

Therefore, $f\in\mathsf{\Sigma QSO(\Sigma_2\text{-}2SAT)}$.
\end{proof}

\medskip
The class $\mathsf{\Sigma QSO(\Sigma_2\text{-}2SAT)}$ contains problems that are tractable, such as \textsc{\#2Col}, which is known to be computable in polynomial time~\cite{GGJ17}. It also contains all the problems in $\mathsf{\#RH\Pi_1}$, such as \textsc{\#Bis}, \textsc{\#1P1NSat}, \textsc{\#Downsets}~\cite{DGGJ04}. These three problems are complete for $\mathsf{\#RH\Pi_1}$ under approximation preserving reductions and are not believed to have an fpras. At last, the problem $\textsc{\#Is}$~\cite{DGGJ04}, which is interriducible with \sSAT\ under approximation preserving reductions, belongs to $\mathsf{\Sigma QSO(\Sigma_2\text{-}2SAT)}$ as well.

We next show that a generalization of \textsc{\#2Sat}, which we will call \textsc{\#Disj2Sat}, is complete for $\mathsf{\Sigma QSO(\Sigma_2\text{-}2SAT)}$ under parsimonious reductions.
 
\subsubsection{Membership of \textsc{\#Disj2Sat} in $\mathsf{\Sigma QSO(\Sigma_2\text{-}2SAT)}$ \medskip \\}

In propositional logic, a 2SAT formula is a conjunction of clauses that contain at most two literals. Suppose we are given a propositional formula $\phi$, which is a disjunction of 2SAT formulae, then \textsc{\#Disj2Sat} on input $\phi$ equals the number of satisfying assignments of $\phi$. 

In this subsection we assume that 2SAT formulae consist of clauses which contain exactly two literals since we can rewrite a clause of the form $l$ as $l\vee l$, for any literal $l$.

\begin{theorem}\label{membership1}
$\textsc{\#Disj2Sat} \in \mathsf{\Sigma QSO(\Sigma_2\text{-}2SAT)}$
\end{theorem}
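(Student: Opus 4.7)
The plan is to express $\textsc{\#Disj2Sat}$ as a $\mathsf{\Sigma QSO(\Sigma_2\text{-}2SAT)}$ sentence $\alpha$ evaluated on a natural encoding $\mathcal{A}_\phi$ of the input $\phi = \psi_1 \vee \cdots \vee \psi_m$. First I would fix the vocabulary $\sigma = \{\mathrm{Var}, \mathrm{Disj}, C_{++}, C_{+-}, C_{-+}, C_{--}\}$, with $\mathrm{Var}, \mathrm{Disj}$ unary and each $C_{\sigma_1 \sigma_2}$ ternary. The universe of $\mathcal{A}_\phi$ consists of the propositional variables of $\phi$ (marked by $\mathrm{Var}$) together with fresh labels $1,\ldots,m$ for the disjuncts (marked by $\mathrm{Disj}$); the relation $C_{\sigma_1\sigma_2}(d,x,y)$ holds iff $\psi_d$ contains the clause in which $x$ appears with sign $\sigma_1$ and $y$ with sign $\sigma_2$.

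Second, using a unary second-order variable $X$ to represent a truth assignment, with the shorthand $X^+(x)\equiv X(x)$ and $X^-(x)\equiv \neg X(x)$, I would take
\[
\alpha \;=\; \Sigma X.\ \exists d\, \forall x\, \forall y.\bigl[\, \mathrm{Disj}(d) \wedge (\mathrm{Var}(x) \vee \neg X(x)) \wedge (\mathrm{Var}(y) \vee \neg X(y)) \wedge \bigwedge_{\sigma_1, \sigma_2 \in \{+,-\}} \bigl(\neg C_{\sigma_1\sigma_2}(d,x,y) \vee X^{\sigma_1}(x) \vee X^{\sigma_2}(y)\bigr) \,\bigr].
\]
After padding with $\bot$ where needed, each conjunct inside the brackets is a disjunction of three items of which at least one is a first-order formula and at most two are second-order literals, so each is a 2SAT clause per the paper's definition; the matrix is then $\Sigma_2$-2SAT and $\alpha$ is a legitimate $\mathsf{\Sigma QSO(\Sigma_2\text{-}2SAT)}$ sentence.

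Third, to verify $[[\alpha]](\mathcal{A}_\phi)=\textsc{\#Disj2Sat}(\phi)$, I would note that the bracketed matrix is first-order and therefore evaluates to $0$ or $1$, so $\Sigma X$ counts exactly the subsets $B \subseteq A$ on which the matrix is $1$. The clause $\mathrm{Var}(x)\vee \neg X(x)$ (universally quantified) forces $B$ to be a subset of the propositional variables, killing the $2^m$ overcount that the disjunct labels would otherwise introduce; $\mathrm{Disj}(d)$ forces the $\exists d$-witness to be an index of some $\psi_d$; and the remaining clauses assert precisely that every clause of $\psi_d$ is satisfied by $B$. Hence $B$ contributes $1$ exactly when it encodes a satisfying assignment of some $\psi_d$, i.e.\ of $\phi$.

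The main obstacle is syntactic rather than combinatorial: the restriction of $X$ to the variable part of the universe, the selection of a legal disjunct index, and the clause-satisfaction conditions must all be packed into a single $\exists\forall$-prefixed conjunction whose clauses contain at most two second-order literals each. Using $\mathrm{Var}$, $\mathrm{Disj}$, and the negations $\neg C_{\sigma_1\sigma_2}$ as the mandatory first-order occupants of each 2SAT clause is the trick that lets the encoding fit the $\Sigma_2$-2SAT template.
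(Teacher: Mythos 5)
Your proposal is correct and follows essentially the same route as the paper: encode the disjunction-of-2SAT formula as a relational structure and count, via $\Sigma X$, the assignments satisfying a single $\exists\forall$-prefixed conjunction with one 2SAT clause per sign pattern, each guarded by a negated encoding relation. The only notable difference is that you fold the paper's clause elements and relation $D$ into the ternary relations $C_{\sigma_1\sigma_2}(d,x,y)$ and add the guard $\mathrm{Var}(x)\vee\neg X(x)$ to confine $X$ to the variable sort of the universe --- a bookkeeping point that the paper's own formula leaves implicit.
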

\begin{proof}
Consider the vocabulary $\sigma=\{C_1,C_2,C_3,C_4,D\}$ where $C_i$, $1\leq i\leq 4$, are ternary relations and $D$ is a binary relation. This vocabulary can encode any formula which is a disjunction of 2SAT formulae. More precisely, $C_1(c,x,y)$ iff clause $c$ is of the form $x\vee y$, $C_2(c,x,y)$ iff $c$ is $\neg x\vee y$, $C_3(c,x,y)$ iff $c$ is $x\vee\neg y$, $C_4(c,x,y)$ iff $c$ is $\neg x\vee\neg y$ and $D(d,c)$ iff clause $c$ appears in the ``disjunct'' $d$.

Let $\phi$ be an input to \textsc{\#Disj2Sat} encoded by an ordered $\sigma$-structure $\mathcal{A}=\langle A, C_1, C_2, C_3,C_4, D\rangle$, where the universe $A$ consists of elements representing variables, clauses and ``disjuncts''. Then, it holds that the number of satisfying assignments of $\phi$ is equal to $[[\Sigma T.\psi(T)]](\mathcal{A})$, where\\

\begin{math}
\begin{array}{ll}
\psi(T):=\exists d\forall c\forall x\forall y & \big((\neg D(d,c)\vee\neg C_1(c,x,y)\vee T(x)\vee T(y))\wedge\\
&(\neg D(d,c)\vee\neg C_2(c,x,y)\vee \neg T(x)\vee T(y)) \wedge\\
&(\neg D(d,c)\vee\neg C_3(c,x,y)\vee T(x)\vee \neg T(y)) \wedge\\
&(\neg D(d,c)\vee\neg C_4(c,x,y)\vee \neg T(x)\vee \neg T(y)\big)
\end{array}\\
\end{math}

Thus, $\#\textsc{Disj2Sat}$ is defined by $\Sigma T.\psi(T)$ which is in $\mathsf{\Sigma QSO(\Sigma_2\text{-}2SAT)}$.
\end{proof}

\subsubsection{Hardness of \textsc{\#Disj2Sat}\medskip\\} 

Suppose we have a formula $\alpha$ in $\mathsf{\Sigma QSO(\Sigma_2\text{-}2SAT)}$ and an input structure $\mathcal{A}$ over a vocabulary $\sigma$. We describe a polynomial-time reduction that given $\alpha$ and $\mathcal{A}$, it returns a propositional formula $\phi_{\alpha_\mathcal{A}}$ which is a disjunction of 2SAT formulae and it holds that $[[\alpha]](\mathcal{A})=\#\textsc{Disj2Sat}(\phi_{\alpha_\mathcal{A}})$. The reduction is a parsimonious reduction, i.e. it preserves the values of the functions involved.

\begin{theorem}\label{hardness of disj2sat}
\textsc{\#Disj2Sat} is hard for $\mathsf{\Sigma QSO(\Sigma_2\text{-}2SAT)}$ under parsimonious reductions.
\end{theorem}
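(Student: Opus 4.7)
My plan is to put $\alpha$ into a normal form and then reduce each piece uniformly to a disjunction of 2SAT formulas. Since both $\Sigma x$ and $\Sigma X$ distribute over $+$ by the semantics in Table~\ref{semantics}, I can rewrite the sentence $\alpha$ as $\sum_i \alpha_i$ where each $\alpha_i$ has the shape $\Sigma \overrightarrow{X}_i.\Sigma \overrightarrow{x}_i.\beta_i$ and $\beta_i$ is either a $\Sigma_2$-2SAT formula or a natural-number constant. A constant $\beta_i=s$ can be realized by a small disjunction of 2SAT formulas on $O(\log s)$ fresh variables: write $s$ in binary and use one disjunct per `1' bit, each disjunct pinning a distinct suffix of those bits so that the disjuncts are pairwise inconsistent.

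For the main case $\alpha_i=\Sigma \overrightarrow{X}.\Sigma \overrightarrow{x}.\phi$ with $\phi=\exists \overrightarrow{y}\,\forall \overrightarrow{z}\,\bigwedge_j C_j(\overrightarrow{x},\overrightarrow{y},\overrightarrow{z},\overrightarrow{X})$, I introduce one propositional variable $p_{X,\overrightarrow{d}}$ for each $X\in\overrightarrow{X}$ and each $\overrightarrow{d}\in A^{\mathrm{arity}(X)}$, so that truth assignments to these variables correspond bijectively to second-order assignments for $\overrightarrow{X}$. For each pair $(\overrightarrow{a},\overrightarrow{b})\in A^{|\overrightarrow{x}|}\times A^{|\overrightarrow{y}|}$ I substitute $\overrightarrow{x}\mapsto\overrightarrow{a}$, $\overrightarrow{y}\mapsto\overrightarrow{b}$, expand $\forall\overrightarrow{z}$ as a ground conjunction over $\overrightarrow{c}\in A^{|\overrightarrow{z}|}$, and simplify each clause $C_j(\overrightarrow{a},\overrightarrow{b},\overrightarrow{c},\overrightarrow{X})$ by evaluating its first-order atoms in $\mathcal{A}$: discard the clause if any such atom is true, otherwise delete those atoms. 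What remains is a 2SAT clause over the $p_{X,\overrightarrow{d}}$'s, and the conjunction $\Psi_{\overrightarrow{a},\overrightarrow{b}}$ of all the surviving clauses is a 2SAT formula satisfying $\phi(\overrightarrow{a},V)$ iff $V$ satisfies $\bigvee_{\overrightarrow{b}}\Psi_{\overrightarrow{a},\overrightarrow{b}}$.

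To realize the outer sums over $\overrightarrow{a}$ and over $i$, I introduce fresh tag variables and attach to every disjunct coming from branch $(\overrightarrow{a},\overrightarrow{b})$ of piece $\alpha_i$ a conjunction of unit literals $T_{i,\overrightarrow{a}}$ that pins the tag variables to a pattern unique to $(i,\overrightarrow{a})$ and pins to $0$ every propositional variable belonging to some other piece or constant gadget. The final formula $\phi_{\alpha_\mathcal{A}}$ is the disjunction of all the tagged 2SAT formulas $T_{i,\overrightarrow{a}}\wedge\Psi^{(i)}_{\overrightarrow{a},\overrightarrow{b}}$ together with the analogously tagged constant gadgets. Pairwise inconsistency of tags across distinct $(i,\overrightarrow{a})$, combined with sharing of the $p_{X,\overrightarrow{d}}$ variables across $\overrightarrow{b}$-branches that carry the same tag, yields $\textsc{\#Disj2Sat}(\phi_{\alpha_\mathcal{A}})=\sum_i\sum_{\overrightarrow{a}}\#\{V:\phi_i(\overrightarrow{a},V)\}=[[\alpha]](\mathcal{A})$; and since $\alpha$ is fixed, all arities and quantifier blocks are constant, so the number of disjuncts and the tag length are polynomial in $|\mathcal{A}|$. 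The main obstacle I anticipate is precisely this bookkeeping around two very different roles of quantifiers: $\Sigma x$ sums over only $|A|$ values and must be unrolled into tag-separated disjuncts, while $\Sigma X$ sums over $2^{|A|^{\mathrm{arity}(X)}}$ Boolean assignments and must be left implicit in the shared $p$-variables so that $\#\mathrm{SAT}$ itself performs that sum---and any unconstrained variable inside a disjunct must be killed off by the accompanying tag conjunction, or else the sum-of-counts identity (and with it the parsimony of the reduction) breaks.
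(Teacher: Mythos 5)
Your proposal follows essentially the same route as the paper's proof: rewrite $\alpha$ as a finite sum of quantifier-prefixed $\Sigma_2$-2SAT terms (the paper gets this normal form from Proposition 5.1 of Arenas et al.), ground the first-order quantifiers over the universe, evaluate and discard the first-order atoms, propositionalize the second-order atoms, and turn the outer sum into a disjunction by attaching mutually inconsistent selector literals to each summand --- your binary tags play exactly the role of the paper's fresh variables $x_1,\dots,x_{m'}$ with $x_i\wedge\bigwedge_{l\neq i}\neg x_l$. The one point needing care is your closing remark about ``killing off'' unconstrained variables: atoms $X(\overrightarrow{a})$ of the \emph{current} piece that disappear during simplification must be left free, not pinned (the paper adds tautological clauses $X(\overrightarrow{a})\vee\neg X(\overrightarrow{a})$ for exactly this purpose), and your construction secures their presence in the formula only because they occur pinned in the other pieces' disjuncts, which silently fails when there is a single piece.
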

\begin{proof}
By Proposition 5.1 of~\cite{Arenas}, $\alpha$ can be written in the form \\
$\displaystyle \sum_{i=1}^{m}\Sigma \overrightarrow{X}_i.\Sigma \overrightarrow{x}.\exists \overrightarrow{y}\forall \overrightarrow{z} \bigwedge_{j=1}^{n} C_j^i(\overrightarrow{X}_i,\overrightarrow{x},\overrightarrow{y},\overrightarrow{z})$,  where each $\overrightarrow{X}_i$ is a sequence of second-order variables and each $C_j^i$ is a 2SAT clause. Each term of the sum can be replaced by $\displaystyle \Sigma \overrightarrow{X}.\Sigma \overrightarrow{x}.\exists \overrightarrow{y}\forall \overrightarrow{z} \bigwedge_{j=1}^{n} C_j^i(\overrightarrow{X}_i,\overrightarrow{x},\overrightarrow{y},\overrightarrow{z})\wedge \bigwedge_{X\not\in\overrightarrow{X}_i} \forall \overrightarrow{u} X(\overrightarrow{u})$ where $\overrightarrow{X}$ is the union of all $\overrightarrow{X}_i$. Now we have expressed $\alpha$ in the following form \\
$\displaystyle \sum_{i=1}^{m}\Sigma \overrightarrow{X}.\Sigma \overrightarrow{x}.\exists \overrightarrow{y}\forall \overrightarrow{z} \bigwedge_{j=1}^{n} \phi_j^i(\overrightarrow{X},\overrightarrow{x},\overrightarrow{y},\overrightarrow{z})$. \\
The next step is to expand the first-order quantifiers and sum operators and replace their variables with first-order constants from the universe $A$. 

In this way, we obtain $\alpha_{\mathcal{A}}:=
\displaystyle \sum_{i=1}^{m}\sum_{\overrightarrow{a}\in A^{|\overrightarrow{x}|}}\Sigma \overrightarrow{X}.\bigvee_{\overrightarrow{b}\in A^{|\overrightarrow{y}|}} \bigwedge_{i=1}^n\bigwedge_{\overrightarrow{c}\in A^{|\overrightarrow{z}|}} \phi_j^i(\overrightarrow{X},\overrightarrow{a},\overrightarrow{b},\overrightarrow{c})$. Each first-order subformula of $\phi_j^i$ has no free-variables and is either satisfied or not satisfied by $\mathcal{A}$, so we can replace it by $\top$ or $\perp$ respectively. Also, after grouping the sums and the conjunctions, we get $\displaystyle \sum_{i=1}^{m'}\Sigma \overrightarrow{X}.\bigvee_{j=1}^{n_1}\bigwedge_{k=1}^{n_2}\psi_{j,k}^i(\overrightarrow{X})$. The formulae $\psi_{j,k}^i(\overrightarrow{X})$ are conjunctions of clauses that consist of $\perp$, $\top$ and at most two literals of the form $X_t(\overrightarrow{a}_l)$ or $\neg X_t(\overrightarrow{a}_l)$ for some second-order variable $X_t$ and some tuple of first-order constants $\overrightarrow{a}_l$. We can eliminate the clauses that contain a $\top$ and remove $\perp$ from the clauses that contain it. After this simplification, some combinations of variable-constants may not appear in the remaining formula. For any such combination $X(\overrightarrow{a})$, we add a clause $\psi_{X,\overrightarrow{a}}:=X(\overrightarrow{a})\vee\neg X(\overrightarrow{a})$, since $X(\overrightarrow{a})$ can have any truth value. 

So, we have reformulated the above formula and we get $\displaystyle \sum_{i=1}^{m'}\Sigma \overrightarrow{X}.\bigvee_{j=1}^{n_1}\bigwedge_{k=1}^{n_2'}\psi_{j,k}^i(\overrightarrow{X})$. After replacing every appearance of $X_t(\overrightarrow{a}_l)$ by a propositional variable $x_{tl}$, the part $\displaystyle\bigvee_{j=1}^{n_1}\bigwedge_{k=1}^{n_2'}\psi_{j,k}^i(\overrightarrow{X})$ becomes a disjunction of 2SAT formulae. Finally, we introduce $m'$ new propositional variables $x_1,...x_{m'}$ and define\\
 $\displaystyle\phi_{\alpha_\mathcal{A}}:=\bigvee_{i=1}^{m'}\bigvee_{j=1}^{n_1}\bigwedge_{k=1}^{n_2'}\psi_{j,k}^i\wedge x_i\bigwedge_{l\neq i} \neg x_l$. The formula $\phi_{\alpha_\mathcal{A}}$ is a disjunction of 2SAT formulae and the number of its satisfying assignments is equal to $[[\alpha]](\mathcal{A})$. Moreover, every transformation we made requires polynomial time in the size of the input structure $\mathcal{A}$. 
\end{proof}
 
 \medskip
 It is known that \textsc{\#2Sat} has no fpras unless $\NP=\RP$, since it is equivalent to counting all independent sets in a graph~\cite{DGGJ04}. Thus,  problems hard for $\mathsf{\Sigma QSO(\Sigma_2\text{-}2SAT)}$ under parsimonious reductions also cannot admit an fpras unless $\NP=\RP$.

 \subsubsection{Inclusion of $\mathsf{\Sigma QSO(\Sigma_2\text{-}2SAT)}$ in $\totp$\medskip\\} 
 
Several problems in $\mathsf{\Sigma QSO(\Sigma_2\text{-}2SAT)}$, like \textsc{\#1P1NSat}, \textsc{\#Is}, \textsc{\#2Col}, and \textsc{\#2Sat}, are also in \totp. We next prove that this is not a coincidence. 
 
 \begin{theorem}
 $\mathsf{\Sigma QSO(\Sigma_2\text{-}2SAT)}\subseteq \totp$
 \end{theorem}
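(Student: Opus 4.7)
My plan is to reduce the inclusion to showing $\textsc{\#Disj2Sat}\in\totp$ and then invoke Theorem \ref{hardness of disj2sat}. The class \totp\ is closed under parsimonious reductions: given $g\in\totp$ witnessed by an NPTM $M_g$ and $f(x)=g(h(x))$ for a polynomial-time computable $h$, the NPTM that on input $x$ first computes $h(x)$ deterministically and then simulates $M_g$ on $h(x)$ has total computation tree coinciding with that of $M_g$ on $h(x)$, so it witnesses $f\in\totp$. Combined with Theorem \ref{hardness of disj2sat}, this reduces the whole theorem to establishing $\textsc{\#Disj2Sat}\in\totp$.

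To prove $\textsc{\#Disj2Sat}\in\totp$, I would apply the result of~\cite{PZ06}, already recalled in the introduction, stating that every self-reducible function in \sPE\ belongs to \totp. Membership in \sPE\ is immediate: a disjunction $\phi_1\vee\cdots\vee\phi_m$ of 2SAT formulae is satisfiable iff at least one $\phi_i$ is, and 2SAT satisfiability lies in \cP, so the decision version of \textsc{\#Disj2Sat} is in \cP. Self-reducibility is witnessed by the identity
$$\#\textsc{Disj2Sat}(\phi)=\#\textsc{Disj2Sat}(\phi[x{:=}0])+\#\textsc{Disj2Sat}(\phi[x{:=}1])$$
for any variable $x$ appearing in $\phi$, the base case being a variable-free formula whose count ($0$ or $1$) is computable in polynomial time. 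For the self-reduction to stay within the problem I need to verify that $\phi[x{:=}b]$ is again a disjunction of 2SAT formulae; this is true because substituting a truth value into a 2SAT clause yields either $\top$ (which may be removed), a unit literal, or another clause of at most two literals.

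The only mildly delicate step is this syntactic stability of the self-reduction under variable substitution; the rest is a routine composition of Theorem \ref{hardness of disj2sat} with the standard closure of \totp\ under parsimonious reductions and the self-reducibility characterization of \totp\ inside \sPE.
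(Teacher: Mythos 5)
Your proposal is correct and takes essentially the same route as the paper: both reduce the inclusion to the single complete problem \textsc{\#Disj2Sat} via Theorem~\ref{hardness of disj2sat} and closure of \totp\ under parsimonious reductions, and both establish $\textsc{\#Disj2Sat}\in\totp$ by observing it is a self-reducible \sPE\ function and invoking the characterization of \totp\ as the Karp closure of such functions from~\cite{PZ06}. You merely spell out in more detail the points the paper leaves implicit (the explicit NPTM for the reduction and the syntactic stability of disjunctions of 2SAT formulae under variable substitution).
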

 \begin{proof}
 Since $\totp$ is exactly the Karp closure of self-reducible functions of $\sPE$~\cite{PZ06}, it suffices to show that the $\mathsf{\Sigma QSO(\Sigma_2\text{-}2SAT)}$-complete problem  \#\textsc{Disj2Sat} is such a function. 
  
  First of all, \textsc{Disj2Sat} belongs to \cP. Thus \#\textsc{Disj2Sat} $\in\sPE$.
 
 Secondly, every counting function associated with the problem of counting satisfying assignments for a propositional formula is self-reducible~\footnote{\totp\ contains all self-reducible problems in \sP,\ with decision version in \cP. 
Intuitively, self-reducibility means that counting the number of solutions to an instance of a problem, can be performed recursively by computing the number of solutions to some other instances of the same problem. For example, \sSAT\ is self-reducible: the number of satisfying assignments of a formula $\phi$ is equal to  the sum of the number of satisfying assignments of $\phi_1$ and $\phi_0$, where $\phi_i$ is $\phi$ with its first variable fixed to $i$.}.
 So \#\textsc{Disj2Sat} has this property as well.
 
  Therefore, any  $\mathsf{\Sigma QSO(\Sigma_2\text{-}2SAT)}$ formula $\alpha$ defines a function $[[\alpha]]$ that belongs to \totp.
 \end{proof}

\medskip
\begin{corollary}
 $\mathsf{\#RH\Pi_1}\subseteq\totp$
\end{corollary}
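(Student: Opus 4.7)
The plan is to obtain the corollary as an immediate consequence of two results already established in the excerpt, chained together by transitivity of set inclusion. Specifically, Proposition~\ref{prop1} gives $\mathsf{\#RH\Pi_1}\subseteq \mathsf{\Sigma QSO(\Sigma_2\text{-}2SAT)}$, and the preceding theorem gives $\mathsf{\Sigma QSO(\Sigma_2\text{-}2SAT)}\subseteq \totp$. Composing these two inclusions yields $\mathsf{\#RH\Pi_1}\subseteq\totp$, which is exactly the statement of the corollary.

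Concretely, I would write: let $f\in\mathsf{\#RH\Pi_1}$. By Proposition~\ref{prop1}, $f\in\mathsf{\Sigma QSO(\Sigma_2\text{-}2SAT)}$, and by the preceding theorem, $f\in\totp$. Since $f$ was arbitrary, the inclusion holds. No additional machinery (self-reducibility arguments, NPTM simulations, or descriptive-complexity manipulations) is required at this point, because all of that work has already been done: Proposition~\ref{prop1} was proved by observing that a Restricted-Horn CNF formula is syntactically a 2SAT formula, and the theorem was proved by showing that the $\mathsf{\Sigma QSO(\Sigma_2\text{-}2SAT)}$-complete problem \textsc{\#Disj2Sat} is self-reducible and has its decision version in \cP, so that the Karp-closure characterization of \totp\ from~\cite{PZ06} applies.

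There is essentially no obstacle to overcome here; the only minor subtlety worth noting in the write-up is that the chain of inclusions composes correctly because both inclusions are as sets of functions (under the identification of formulae and the counting functions they induce, which the paper has already fixed). As an optional remark, I would point out that the corollary reproves in a uniform way the membership in \totp\ of the known $\mathsf{\#RH\Pi_1}$-complete problems mentioned earlier in the section, namely \textsc{\#Bis}, \textsc{\#1P1NSat}, and \textsc{\#Downsets}, thereby situating them inside the robust subclass of \totp\ introduced here.
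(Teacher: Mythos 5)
Your proof is correct and is exactly the argument the paper intends: the corollary follows immediately by composing Proposition~\ref{prop1} with the preceding theorem $\mathsf{\Sigma QSO(\Sigma_2\text{-}2SAT)}\subseteq \totp$, which is why the paper states it without further proof. Nothing is missing.
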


\subsection{The class $\#\mathsf{\Pi_2\text{-}1VAR}$}

To define the second class $\#\mathsf{\Pi_2\text{-}1VAR}$, we make use of the framework presented in~\cite{Saluja}.

We say that a counting problem $\#B$ belongs to the class $\#\mathsf{\Pi_2\text{-}1VAR}$ if for any ordered structure $\mathcal{A}$ over a vocabulary $\sigma$, which is an input to $\#B$, it holds that $\#B(\mathcal{A})=|\{\langle X \rangle: \mathcal{A}\models \forall \overrightarrow{y}\exists\overrightarrow{z}\psi(\overrightarrow{y},\overrightarrow{z},X)\}|$. The formula $\psi(\overrightarrow{y},\overrightarrow{z},X)$ is of the form $\displaystyle\phi(\overrightarrow{y},\overrightarrow{z})\wedge X(\overrightarrow{z})$, where $\phi$ is a first-order formula over $\sigma$ and $X$ is a positive appearance of a second-order variable. We call the formula $\psi$ a variable, since it contains only one second-order variable. Moreover, we allow counting only the assignments to the second-order variable $X$ under which the structure $\mathcal{A}$ satisfies $\forall \overrightarrow{y}\exists\overrightarrow{z}\psi(\overrightarrow{y},\overrightarrow{z},X)$.

\begin{proposition}\label{prop2}
$\textsc{\#Vc}\in\#\mathsf{\Pi_2\text{-}1VAR}$, where \textsc{\#Vc} is the problem of counting the vertex covers of all sizes in a graph.
\end{proposition}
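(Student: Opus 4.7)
The plan is to exhibit a vocabulary $\sigma$, a graph encoding, and a first-order $\phi$ such that the sentence $\forall\overrightarrow{y}\exists\overrightarrow{z}\bigl[\phi(\overrightarrow{y},\overrightarrow{z})\wedge X(\overrightarrow{z})\bigr]$ has exactly $\#\textsc{Vc}(G)$ satisfying assignments $X$. The naive starting point is the textbook vertex-cover condition $\forall y_1 y_2[E(y_1,y_2)\Rightarrow X(y_1)\vee X(y_2)]$ together with its rewriting as $\forall y_1 y_2\exists z[(z=y_1\vee z=y_2)\wedge X(z)]$ on edges, and a trivially true clause on non-edges. The obstacle is that $X(z)$ is a mandatory conjunct inside $\exists z$: for any non-empty universe, the formula is automatically falsified by $X=\emptyset$, so the empty vertex cover (valid for edgeless graphs) is always missed and the count comes up short.

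To avoid this I would use the vocabulary $\sigma=\{E,V\}$ with a unary ``real vertex'' predicate $V$, and encode each graph $G=(V_G,E_G)$ by a structure $\mathcal{A}$ with universe $A=V_G\cup\{d\}$, $V^{\mathcal{A}}=V_G$, $E^{\mathcal{A}}=E_G$; here $d$ is a fresh dummy element identifiable by $\neg V(z)$. Then I would take
\[
\phi(y_1,y_2,z)\;:=\;\bigl(\neg(V(y_1)\wedge V(y_2)\wedge E(y_1,y_2))\wedge\neg V(z)\bigr)\;\vee\;\bigl(V(y_1)\wedge V(y_2)\wedge E(y_1,y_2)\wedge(z=y_1\vee z=y_2)\bigr),
\]
so that a routine case split gives: $\exists z[\phi\wedge X(z)]$ reduces to $X(d)$ on every pair $(y_1,y_2)$ that is not a real edge, and to $X(y_1)\vee X(y_2)$ on every real edge. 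Consequently, $\mathcal{A}\models\forall y_1 y_2\exists z[\phi\wedge X(z)]$ holds exactly when $d\in X$ and $X\cap V_G$ is a vertex cover of $G$.

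The last step is to observe that $S\mapsto\{d\}\cup S$ is a bijection between vertex covers $S$ of $G$ and satisfying $X$'s, so $[[\forall y_1 y_2\exists z(\phi\wedge X(z))]](\mathcal{A})=\#\textsc{Vc}(G)$, placing \textsc{\#Vc} in $\#\mathsf{\Pi_2\text{-}1VAR}$. The only delicate point of the whole plan is the dummy-element trick: it lets the non-empty witness set $\{d\}$ encode the empty vertex cover, which the 1VAR form would otherwise fail to capture under a standard encoding.
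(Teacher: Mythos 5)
Your construction is correct, but it is genuinely different from the paper's. The paper encodes $G$ over the vocabulary $\{E,End\}$ with a universe consisting of all vertices \emph{and} all edges, where $End(u,e)$ holds iff $u$ is an endpoint of $e$, and simply writes $\#\textsc{Vc}(\mathcal{G})=|\{\langle VC\rangle \mid \mathcal{G}\models\forall x\,\exists y\,(End(y,x)\wedge VC(y))\}|$. You keep the plain adjacency relation and instead adjoin a single dummy element $d$ distinguished by $\neg V$. The real content of your version is the point you isolate explicitly: the $\#\mathsf{\Pi_2\text{-}1VAR}$ template forces the conjunct $X(z)$ inside every $\exists z$, so $X=\emptyset$ can never satisfy the sentence and some non-empty witness must stand in for vacuously covered tuples; your bijection $S\mapsto S\cup\{d\}$ handles this and in particular recovers the empty vertex cover of an edgeless graph. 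The paper's one-line formula glosses over exactly this issue: with $x$ ranging over a universe that also contains the vertices, $\exists y\,(End(y,x)\wedge VC(y))$ fails whenever $x$ is a vertex element, and nothing stops $VC$ from containing edge elements, which would multiply the count by $2^{|E|}$. So your dummy-element trick is not merely a stylistic alternative to the paper's incidence encoding; it supplies exactly the correction that the paper's sketch needs for the count to come out right, and the same repair (a designated mandatory witness for the ``vacuous'' tuples, plus an argument that no spurious freedom remains in $X$) would be needed to make the paper's encoding precise.
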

\begin{proof}
An input graph $G$ to \textsc{\#Vc} can be encoded as a finite structure $\mathcal{G}$ using the vocabulary $\sigma=\{E, End\}$, where $E$ is the edge relation and $End$ is a binary relation. The universe is the set of all vertices and all edges. $End(u,e)$ iff vertex $u$ is an endpoint of edge $e$. Then, $\#\textsc{Vc}(\mathcal{G})=|\{\langle VC\rangle \mid \mathcal{G}\models\forall x\exists y \Big(End(y,x)\wedge VC(y)\}|$. Therefore, $\#\textsc{Vc}\in\#\mathsf{\Pi_2\text{-}1VAR}$.
\end{proof}

\subsubsection{Completeness of \#\textsc{MonotoneSat} for $\#\mathsf{\Pi_2\text{-}1VAR}$\medskip\\} \label{sub3.1}

Given  a propositional formula $\phi$ in conjunctive normal form, where all the literals are positive,  \#\textsc{MonotoneSat} on input $\phi$ equals the number of satisfying assignments of $\phi$.

\begin{theorem}\label{membership2}
$\#\textsc{MonotoneSat} \in \#\mathsf{\Pi_2\text{-}1VAR}$
\end{theorem}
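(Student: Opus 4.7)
My plan is to construct, for each monotone CNF input $\phi$, a $\sigma$-structure $\mathcal{A}$ and a $\Pi_2$-1VAR formula $\forall\overrightarrow{y}\,\exists\overrightarrow{z}\,(\phi_0\wedge X(\overrightarrow{z}))$ whose count of satisfying second-order assignments $X$ equals $\#\phi$ exactly. The encoding I have in mind uses vocabulary $\sigma=\{V,C,P\}$, where $V$ and $C$ are unary relations distinguishing variables from clauses and $P$ is binary with $P(v,c)$ holding iff variable $v$ occurs in clause $c$; the universe of $\mathcal{A}$ consists of the variables and clauses of $\phi$, and the unary second-order variable $X$ is interpreted as the set of variables assigned true.

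The first candidate formula I would write down is
\[
\forall y\,\exists z\,\bigl((\neg C(y)\vee P(z,y))\wedge X(z)\bigr).
\]
For $y$ a clause this asks that some variable $z\in X$ appear in $y$, which is the standard clause-satisfaction condition. For $y$ not a clause the first disjunct is vacuously true and the conjunct reduces to ``$X\ne\emptyset$'', a side condition automatically satisfied whenever $\phi$ has at least one non-empty clause, since monotonicity forces every satisfying assignment of such a $\phi$ to be non-empty.

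The principal obstacle is obtaining \emph{exact} equality of counts: a priori $X$ ranges over subsets of the whole universe, including the clause-elements, which introduces a spurious factor of $2^m$. I would resolve this by refining the encoding so that the universe consists only of variables, with the clauses represented via a ternary relation $Q(a,b,v)$ whose first two arguments (``clause index'' and ``position'') re-use variable names; a preprocessing step that pads $\phi$ by fresh auxiliary variables pinned to true via unit clauses (a transformation that preserves $\#\phi$) ensures enough indices are available. The analogous formula $\forall a\,\exists z\,\bigl((\neg C(a)\vee\exists b\,Q(a,b,z))\wedge X(z)\bigr)$ then yields the required exact equality, and the bijection between satisfying $X$'s and satisfying truth assignments of $\phi$ — together with the trivial edge cases of an empty formula or an empty clause — is straightforward to verify.
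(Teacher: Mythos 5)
Your first candidate is essentially the paper's entire proof: the paper takes $\sigma=\{C\}$ with $C(c,x)$ meaning ``variable $x$ occurs in clause $c$'' and simply asserts $\#\textsc{MonotoneSat}(\phi)=|\{\langle T\rangle:\mathcal{A}\models\forall c\exists x\,(C(c,x)\wedge T(x))\}|$ --- without even the guard $\neg C(y)$, and without addressing either of the two difficulties you identify (the behaviour of the universal quantifier on non-clause elements, and the fact that $T$ ranges over subsets of a universe that also contains clause elements, which threatens a spurious factor of $2^{m}$). So your diagnosis of what needs care is sound, and up to that point you are on the paper's track.

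The gap is in your repair, which as stated does not work. First, the padding cannot supply the missing clause indices: every fresh auxiliary variable you add comes with its own unit clause, so the quantity (number of clauses) minus (number of variables) is invariant under the padding, and an input with more clauses than variables (e.g.\ all three nonempty monotone clauses over two variables) never acquires enough single-element indices. If instead you intend clauses to be indexed by \emph{pairs} $(a,b)$ --- which is what ``first two arguments'' suggests and what would make the index count work --- then the displayed formula $\forall a\,\exists z\,((\neg C(a)\vee\exists b\,Q(a,b,z))\wedge X(z))$ is wrong: it only requires that for each first coordinate $a$ \emph{some} clause $(a,b)$ contain a true variable, not that every clause be satisfied. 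The pair must go into the universal prefix, e.g.\ $\forall a\,\forall b\,\exists z\,((\neg C(a,b)\vee Q(a,b,z))\wedge X(z))$, which is still of the admissible shape $\phi(\overrightarrow{y},\overrightarrow{z})\wedge X(\overrightarrow{z})$. Second, the empty-formula case is not trivial: with no clauses your side condition degenerates to $X\neq\emptyset$ and you count $2^{n}-1$ rather than $2^{n}$, so you must guarantee at least one (unit) clause in every encoded instance. Third, a ``preprocessing step'' applied to $\phi$ is really a parsimonious reduction composed with an encoding; since $\#\mathsf{\Pi_2\text{-}1VAR}$ is not known to be closed under such reductions, you should present the padded relational structure as \emph{the} encoding of a monotone formula rather than as a transformation of a prior one.
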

\begin{proof}
Consider the vocabulary $\sigma=\{C\}$ with the binary relation $C(c,x)$ to indicate that the variable $x$ appears in the clause $c$. Given a $\sigma$-structure $\mathcal{A}=\langle A, C\rangle$ that encodes a formula $\phi$, which is an input to \#\textsc{MonotoneSat}, it holds that 
\#\textsc{MonotoneSat}$(\phi)$=$|\{\langle T\rangle: \mathcal{A}\models \forall c\exists x \big(C(c,x)\wedge T(x)\big)\}|$.

Therefore, \#\textsc{MonotoneSat} $\in$ $\#\mathsf{\Pi_2\text{-}1VAR}$.
\end{proof}

\medskip
\begin{theorem}\label{hardness of monotonesat}
\#\textsc{MonotoneSat} is hard for $\#\mathsf{\Pi_2\text{-}1VAR}$ under product reductions.
\end{theorem}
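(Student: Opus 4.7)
The plan is to build, from any input structure $\mathcal{A}$ to a problem $\#B \in \#\mathsf{\Pi_2\text{-}1VAR}$, a monotone CNF formula whose satisfying assignments are in bijection with the second-order assignments $X$ being counted. Let $\#B$ be defined by $\forall \overrightarrow{y}\exists \overrightarrow{z}\,(\phi(\overrightarrow{y},\overrightarrow{z}) \wedge X(\overrightarrow{z}))$, where $\phi$ is a fixed first-order formula over $\sigma$ that does not mention $X$. Given an input $\sigma$-structure $\mathcal{A}$ with universe $A$, I would introduce one propositional variable $x_{\overrightarrow{a}}$ for each tuple $\overrightarrow{a}\in A^{|\overrightarrow{z}|}$, so that an assignment to these variables corresponds exactly to the choice of a relation $X \subseteq A^{|\overrightarrow{z}|}$.

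Next, for each $\overrightarrow{y}\in A^{|\overrightarrow{y}|}$, I would compute the set $S_{\overrightarrow{y}} = \{\overrightarrow{z}\in A^{|\overrightarrow{z}|} : \mathcal{A}\models \phi(\overrightarrow{y},\overrightarrow{z})\}$; this is polynomial-time computable because $\phi$ is fixed and first-order model checking on $\mathcal{A}$ runs in polynomial time in $|\mathcal{A}|$. I would then take as a clause $\bigvee_{\overrightarrow{z}\in S_{\overrightarrow{y}}} x_{\overrightarrow{z}}$, one clause per tuple $\overrightarrow{y}$, and let $\Phi_{\mathcal{A}}$ be the monotone CNF formed by their conjunction. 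The crucial observation is that an assignment $X$ satisfies $\forall \overrightarrow{y}\exists \overrightarrow{z}\,(\phi(\overrightarrow{y},\overrightarrow{z}) \wedge X(\overrightarrow{z}))$ if and only if the corresponding propositional assignment makes every clause of $\Phi_{\mathcal{A}}$ true, since the clause for $\overrightarrow{y}$ demands precisely that some $\overrightarrow{z}$ with $\mathcal{A}\models \phi(\overrightarrow{y},\overrightarrow{z})$ lie in $X$.

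The main subtlety, which is also the reason one needs a product reduction rather than a parsimonious one, is that if $S_{\overrightarrow{y}} = \emptyset$ for some $\overrightarrow{y}$, then $\#B(\mathcal{A}) = 0$; yet every monotone CNF is satisfied by the all-true assignment and thus has strictly positive count, so no monotone formula can witness a zero count. I would handle this by setting $h_1(\mathcal{A})$ to be the trivial single-clause formula $x$ and $h_2(\mathcal{A}) = 0$ in this degenerate case, and otherwise $h_1(\mathcal{A}) = \Phi_{\mathcal{A}}$ with $h_2(\mathcal{A}) = 1$. Both $h_1$ and $h_2$ are polynomial-time computable, and by the bijection above one obtains $\#B(\mathcal{A}) = \#\textsc{MonotoneSat}(h_1(\mathcal{A})) \cdot h_2(\mathcal{A})$, which is the required product reduction. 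The main obstacle is conceptual rather than technical: spotting the variable/clause correspondence and recognising that the empty-clause edge case is the sole obstruction to parsimony; the remaining verifications are straightforward.
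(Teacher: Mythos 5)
Your construction is essentially the paper's: one propositional variable per tuple in $A^{|\overrightarrow{z}|}$, one monotone clause $\bigvee_{\overrightarrow{z}\in S_{\overrightarrow{y}}}x_{\overrightarrow{z}}$ per tuple $\overrightarrow{y}$, with $S_{\overrightarrow{y}}$ computed by first-order model checking. Your explicit treatment of the case $S_{\overrightarrow{y}}=\emptyset$ (via $h_2=0$) is a point the paper glosses over. However, there is a genuine gap in the main case: you claim that when every clause is nonempty the correspondence is a bijection and $h_2(\mathcal{A})=1$ suffices, and that the empty-clause situation is ``the sole obstruction to parsimony.'' That is false. If some tuple $\overrightarrow{a}\in A^{|\overrightarrow{z}|}$ belongs to no $S_{\overrightarrow{y}}$, the variable $x_{\overrightarrow{a}}$ occurs in no clause of $\Phi_{\mathcal{A}}$, so under the standard convention (which the paper uses) that $\#\textsc{MonotoneSat}(\phi)$ counts assignments to the variables actually occurring in $\phi$, this free coordinate of $X$ is not counted on the right-hand side, while on the left-hand side $X$ may freely include or exclude $\overrightarrow{a}$. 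You cannot repair this by padding the formula, because a monotone clause mentioning $x_{\overrightarrow{a}}$ necessarily constrains it --- there is no monotone analogue of $x\vee\neg x$. Concretely, with universe $\{a,b\}$, $|\overrightarrow{z}|=1$ and $S_{\overrightarrow{y}}=\{a\}$ for every $\overrightarrow{y}$, one has $\#B(\mathcal{A})=2$ (the relation must contain $a$ and may or may not contain $b$) while $\Phi_{\mathcal{A}}$ is the single clause $x_a$ with exactly one satisfying assignment.

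This missing multiplicative factor is precisely why the reduction is a product reduction rather than a parsimonious one: the paper sets $h_2(\mathcal{A})=2^{n(\mathcal{A})}$, where $n(\mathcal{A})$ is the number of tuples $\overrightarrow{a}$ for which $X(\overrightarrow{a})$ appears in no clause after simplification. Adding this factor (and keeping your $h_2=0$ convention for the degenerate case) completes your argument and makes it match the paper's proof.
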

\begin{proof}
 We show that there is a polynomial-time product reduction from any  $\# B\in\#\mathsf{\Pi_2\text{-}1VAR}$ to \#\textsc{MonotoneSat}. This means that there are polynomial-time computable functions $g$ and $h$, such that for every $\sigma$-strucrure $\mathcal{A}$ that is an input to $\#B$ we have 
$\#B(\mathcal{A})=\#\textsc{MonotoneSat}\big(g(\mathcal{A})\big)\cdot h(|A|)$. 

Suppose we have a problem $\# B\in \#\mathsf{\Pi_2\text{-}1VAR}$ and a $\sigma$-structure $\mathcal{A}$. Then, there exists a formula $\psi$ of the form $\displaystyle\psi(\overrightarrow{y},\overrightarrow{z},X)=\phi(\overrightarrow{y},\overrightarrow{z})\wedge X(\overrightarrow{z})$ such that $\#B(\mathcal{A})=|\{\langle X\rangle:\mathcal{A}\models\forall\overrightarrow{y}\exists\overrightarrow{z}\psi(\overrightarrow{y},\overrightarrow{z},X)\}|$.

The formula $\forall\overrightarrow{y}\exists\overrightarrow{z}\psi(\overrightarrow{y},\overrightarrow{z},X)$ can be written in the form 

\[\displaystyle\bigwedge_{\overrightarrow{a}\in A^{|\overrightarrow{y}|}}\bigvee_{\overrightarrow{b}\in A^{|\overrightarrow{z}|}}\phi(\overrightarrow{a},\overrightarrow{b})\wedge X(\overrightarrow{b}).\]

By substituting first-order subformulae by $\top$ or $\perp$ and simplifying, we obtain
$\displaystyle \chi_{\psi_\mathcal{A}}:=\bigwedge_{i=1}^{n_1}\bigvee_{j=1}^{n_2} X(\overrightarrow{b}_{i,j})$, where each $\overrightarrow{b}_{i,j}$ is a tuple of first-order constants. To define $\chi_{\psi_\mathcal{A}}$, we have simplified the subformulae containing $\perp$ and $\top$. As a result, there may be some combinations of the second-order variable $X$ and first-order constants that do not appear in $\chi_{\psi_\mathcal{A}}$. Let $n(\mathcal{A})$ be the number of these combinations. The last transformation consists of replacing every $X(\overrightarrow{b}_{i,j})$ with a propositional variable $x_{ij}$, so we get the output of the function $g$, which is $\displaystyle g(\mathcal{A}):=\bigwedge_{i=1}^{n_1}\bigvee_{j=1}^{n_2} x_{i,j}$. This formula has no negated variables, so it can be an input to \#\textsc{MonotoneSat}. Finally, since the missing $n(\mathcal{A})$ variables can have any truth value, we have $\#B(\mathcal{A})=\#\textsc{MonotoneSat}\big(g(\mathcal{A})\big)\cdot 2^{n(\mathcal{A})}$. 
\end{proof}

\subsubsection{Inclusion of $\#\mathsf{\Pi_2\text{-}1VAR}$ in $\totp$\\}
\begin{theorem}
$\#\mathsf{\Pi_2\text{-}1VAR}\in\totp$
\end{theorem}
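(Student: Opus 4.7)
The plan is to follow exactly the template of the preceding proof that $\mathsf{\Sigma QSO(\Sigma_2\text{-}2SAT)}\subseteq\totp$: first place the complete problem \#\textsc{MonotoneSat}\ into \totp, and then propagate membership to the whole class through the completeness reduction provided by Theorem~\ref{hardness of monotonesat}. The twist here is that completeness holds only under product reductions rather than parsimonious ones, so the final step will need closure of \totp\ under this stronger reduction notion, not merely Karp closure.

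For the first step I would appeal to the Pagourtzis--Zachos characterization of \totp\ as the Karp closure of self-reducible functions in \sPE. Membership of \#\textsc{MonotoneSat}\ in \sPE\ is immediate, since any monotone CNF is satisfied by the all-true assignment and thus its decision version is trivially in \cP. Self-reducibility is the standard first-variable branching $\#\textsc{MonotoneSat}(\phi)=\#\textsc{MonotoneSat}(\phi_0)+\#\textsc{MonotoneSat}(\phi_1)$, where $\phi_i$ is $\phi$ with its first variable set to $i$ and simplified; the only thing to verify is that $\phi_0$ and $\phi_1$ are themselves monotone formulae, which they are (setting a positive literal to true removes satisfied clauses, setting it to false just deletes the literal from each clause).

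To lift membership to the entire class, take $f\in\#\mathsf{\Pi_2\text{-}1VAR}$. Theorem~\ref{hardness of monotonesat} gives polynomial-time computable $g$ and $h$ with $f(x)=\#\textsc{MonotoneSat}(g(x))\cdot h(x)$, where crucially $h(x)=2^{n(x)}\ge 1$ is always strictly positive. The composition $x\mapsto\#\textsc{MonotoneSat}(g(x))$ lies in \totp\ by the Karp closure used above. The multiplier $h$, being a positive-valued \FP\ function, is itself realizable in \totp: I would exhibit an NPTM whose computation tree has exactly $h(x)+1$ leaves, built in polynomial time by following the binary representation of $h(x)+1$ to guide a balanced branching. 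Multiplicative closure of \totp\ (as shown in~\cite{BCPPZ17}) then yields $f\in\totp$.

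The main obstacle I anticipate is precisely this last step: one cannot blindly invoke Karp closure because the product reduction introduces the extra \FP\ factor $h$. The saving observation is the embedding of positive \FP\ functions into \totp\ together with the multiplicative closure property; once these are in place, the argument is routine. Care must be taken that the NPTM realizing $h$ produces \emph{exactly} $h(x)+1$ computation paths rather than the easier-to-construct overshoot $2^{\lceil\log_2(h(x)+1)\rceil}$, but this is a standard binary-tree construction driven by the bits of $h(x)+1$.
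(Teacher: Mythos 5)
Your proposal follows exactly the paper's (very terse) argument: show $\#\textsc{MonotoneSat}\in\totp$ and that \totp\ is closed under product reductions, then invoke the completeness result. The paper leaves both steps as ``easy to prove''; your filling-in --- self-reducibility plus membership in \sPE\ for the first step, and the embedding of positive \FP\ functions into \totp\ together with multiplicative closure for the second --- is a correct way to discharge them.
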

\begin{proof}
It is easy to prove that $\#\textsc{MonotoneSat} \in \totp$ and that \totp\ is closed under product reductions.  Thus, the above results imply that every counting problem in $\#\mathsf{\Pi_2\text{-}1VAR}$ belongs to \totp.
\end{proof}

\section{On \totp\ vs. \FPRAS}\label{fpras section}

In this section we study the relationship between the classes \totp\ and 
\FPRAS. First of all we give some definitions and facts that will be needed.

\begin{theorem}\cite{PZ06}
\label{PZmt}
(a) \FP\ $\subseteq$ \totp\ $\subseteq \sPE\ \subseteq \sP$. The inclusions are proper unless $\cP=\NP$.

(b) \totp\ is the Karp closure of self-reducible $\sPE$ functions.
\end{theorem}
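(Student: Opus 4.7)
The theorem has two parts. Part (a) establishes a chain of inclusions and their strictness under $\cP \neq \NP$, while part (b) characterises \totp\ as the Karp closure of self-reducible \sPE\ functions. I would prove part (a) one inclusion at a time and then exhibit a witness for each strictness claim, and prove part (b) by matching two directions.

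For the inclusions in part (a): $\sPE \subseteq \sP$ is immediate from the definitions. For $\totp \subseteq \sPE$, observe that $L_{tot_M} = \{x : tot_M(x) > 0\}$ equals $\{x : M \text{ branches at least once on } x\}$, which is decidable in polynomial time by deterministically simulating any one computation path of $M$ and checking whether any configuration along it has more than one successor. For $\FP \subseteq \totp$, given $f \in \FP$, I would design an NPTM $M_f$ that first computes $f(x)$ deterministically and then produces exactly $f(x)+1$ computation paths via a standard binary-tree construction of polynomial depth parametrised by the bits of $f(x)$, yielding $tot_{M_f}(x) = f(x)$. For strictness, $\sSAT \in \sP \setminus \sPE$ unless $\cP = \NP$ since its decision version is \NP-complete; and \dnf\ lies in \totp\ by part~(b) (being a self-reducible \sPE\ function) and is \sP-complete under Cook reductions, so $\dnf \in \FP$ would give $\sP \subseteq \FP$ and hence $\cP = \NP$. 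The subtlest separation is $\totp \subsetneq \sPE$, since membership in \totp\ demands a very rigid path-counting structure; I would approach it by combining an \NP-complete problem with a padding that trivialises the decision version while forcing any witnessing NPTM to implicitly decide an \NP-complete query in order to match the exact path count. This is the main obstacle of the proof and may require an additional structural lemma about which \sPE\ functions are realisable as $tot_M$ for some NPTM.

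For part (b), the technical heart is showing that every self-reducible $f \in \sPE$ lies in \totp: given such an $f$, I would construct an NPTM $M_f$ whose computation tree mimics the self-reduction recursion tree, using the tractability of $L_f$ to prune subtrees whose value is $0$ and inserting dummy branches to absorb off-by-one constants, so that $tot_{M_f}(x) = f(x)$ exactly. Closure of \totp\ under Karp reductions is then straightforward: composing a polynomial-time reduction $h$ with an NPTM $M$ gives an NPTM $M'$ with $tot_{M'}(x) = tot_M(h(x))$, so the Karp closure of self-reducible \sPE\ functions is contained in \totp. Conversely, every $tot_M \in \totp$ is itself self-reducible, with value decomposing along the children of the initial configuration of $M$, and lies in \sPE\ by the $\totp \subseteq \sPE$ argument above, so the identity Karp reduction already places \totp\ inside the Karp closure of self-reducible \sPE\ functions, completing the characterisation.
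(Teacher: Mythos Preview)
The paper does not supply a proof of this theorem at all: it is stated with the citation \cite{PZ06} and immediately followed by the next definition, so there is nothing to compare your argument against in this document. The result is quoted as background from Pagourtzis and Zachos (2006), and the present paper only \emph{uses} it (e.g.\ in the proof of Theorem~\ref{four worlds} and in showing $\mathsf{\Sigma QSO(\Sigma_2\text{-}2SAT)}\subseteq\totp$).

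As for the substance of your sketch, most of it is along the right lines, but two points deserve a warning. First, for $\totp\subseteq\sPE$ you argue that $L_{tot_M}\in\cP$, which is fine, but you never address why $tot_M\in\sP$ in the first place; this is not automatic, since $tot_M$ counts \emph{all} paths (minus one) rather than accepting paths, and one needs a small construction to realise it as an $acc_{M'}$. Second, and more seriously, for the separation $\totp\subsetneq\sPE$ you yourself flag that you do not have an argument, only a vague plan involving padding. In the original \cite{PZ06} proof this is handled by exhibiting a concrete function in $\sPE$ (essentially $\sSAT+1$, or a close variant) whose membership in $\totp$ would let one decide \SAT\ in polynomial time; your ``padding that trivialises the decision version'' is heading in that direction but is not yet a proof. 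For part~(b), your converse direction---arguing that each $tot_M$ is itself self-reducible via the children of the root configuration---is workable but requires care in setting up the instance encoding so that ``configuration of $M$'' is a legitimate input to the same counting problem; the route taken in \cite{PZ06} instead Karp-reduces every $tot_M$ to a fixed self-reducible $\sPE$ problem, which sidesteps that encoding issue.
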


We consider \FPRAS\ to be the class of functions in \sP\ that admit fpras, and we also introduce an ancillary class \FPRASP. 
Formally:

\begin{definition}   A function $f$ belongs to \FPRAS\ if $f\in \sP$ and there exists a randomized algorithm that on input $x \in \Sigma^*,$ $\epsilon>0,$ $\delta>0,$  returns a value $\widehat{f(x)}$ such that
\[\Pr[(1-\epsilon) f(x) \leq \widehat{f(x)} \leq (1+\epsilon) f(x)]\geq 1-\delta\] in time poly($|x|,\epsilon^{-1}, \log \delta^{-1}$).

We further say that a function $f\in \FPRAS$ belongs to \FPRASP\ if whenever $f(x)=0$ the returned value $\widehat{f(x)}$ equals $0$ with probability~1.
\end{definition}

We begin with the following observation.\footnote{The following theorem is probably well-known among counting complexity researchers. However, since we have not been able to find a proof in the literature we provide one here for the sake of completeness.}

\begin{theorem}\label{sP vs fpras}
$\sP\subseteq \FPRAS$ if and only if \NP=\RP.
\end{theorem}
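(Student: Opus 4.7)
The plan is to prove both directions of the biconditional separately and independently.

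For the direction ($\Leftarrow$), I would invoke Stockmeyer's approximate counting theorem, which gives an FPRAS for every $f \in \sP$ using an $\NP$ oracle (via universal hashing and witness counting). Assuming $\NP = \RP$, we have $\NP \subseteq \BPP$, so each oracle call can be simulated by a randomized polynomial-time algorithm with arbitrarily small error. A standard amplification together with a union bound over the polynomially many oracle queries (and over the internal randomness of the approximate-counting procedure) then yields a genuine, oracle-free FPRAS for $f$. This boils down to the well-known identity $\BPP^{\BPP}=\BPP$, applied to the relativized FPRAS.

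For the direction ($\Rightarrow$), assume $\sP \subseteq \FPRAS$. Then in particular $\sSAT$ has an FPRAS. Running this FPRAS with parameters $\epsilon = 1/2$ and $\delta = 1/3$ distinguishes the case $\sSAT(\phi)=0$ from $\sSAT(\phi)\geq 1$: if $\sSAT(\phi)=0$ then the multiplicative guarantee forces $\widehat{\sSAT(\phi)} = 0$ with probability at least $2/3$, while if $\sSAT(\phi)\geq 1$ we get $\widehat{\sSAT(\phi)} \geq (1-\epsilon)\cdot 1 = 1/2 > 0$ with probability at least $2/3$. Hence $\SAT \in \BPP$ and $\NP \subseteq \BPP$. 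To strengthen this to $\NP \subseteq \RP$, I would use the self-reducibility of $\SAT$: on input $\phi$, amplify the $\BPP$-decider so that its error is at most $1/(3n)$, then assign the variables one at a time by asking, for each partial assignment, whether the reduced formula is satisfiable; finally, verify the resulting full assignment deterministically. If $\phi \notin \SAT$ no satisfying assignment is ever produced, so the algorithm rejects with probability $1$; if $\phi \in \SAT$ a union bound over the $n$ oracle calls shows the correct assignment is found with probability at least $2/3$. This places $\SAT$ in $\RP$, and combined with the trivial $\RP \subseteq \NP$ gives $\NP = \RP$.

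The main conceptual step is the upgrade from $\NP \subseteq \BPP$ to $\NP \subseteq \RP$ via self-reducibility; the rest is routine amplification and the classical approximate-counting-to-$\NP$ reduction. A small point to be careful about in the forward direction is that Stockmeyer's procedure is usually stated for inputs where $f(x) \geq 1$, so one should handle $f(x)=0$ separately (for instance by testing the $\NP$-predicate defining $f$), which is harmless since under $\NP = \RP$ this test is itself in $\BPP$.
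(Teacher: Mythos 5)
Your proof is correct, and the two directions split differently from the paper. For ($\Leftarrow$) you follow essentially the paper's route: both invoke Stockmeyer's approximate counting theorem relative to an oracle that collapses into \BPP\ under $\NP=\RP$, and then absorb the \BPP\ oracle into the randomized approximator (the paper phrases the oracle as $\Sigma_2^p$ and collapses it via $\Sigma_2^p=\RP^{\RP}\subseteq\BPP$, while you use the \NP-oracle form directly; both are standard and equivalent for this purpose). For ($\Rightarrow$) your argument is genuinely different: the paper simply cites the known result that \textsc{\#Is} admits no fpras unless $\NP=\RP$, outsourcing the entire direction to a nontrivial inapproximability theorem, whereas you give a self-contained argument --- an fpras for \sSAT\ with constant $\epsilon,\delta$ decides \SAT\ in \BPP, and the self-reducibility of \SAT\ (build an assignment variable by variable using an amplified \BPP\ decider, then verify deterministically) upgrades $\NP\subseteq\BPP$ to $\NP\subseteq\RP$, which is the classical Ko-style collapse. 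Your version buys independence from the \textsc{\#Is} hardness literature at the cost of a slightly longer argument; the paper's version is shorter but leans on a citation that is itself proved by a comparable (in fact harder) reduction. Both are sound; your handling of the $f(x)=0$ case and of the error amplification with a union bound over the $n$ self-reduction queries is careful and correct.
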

\begin{proof} 
For the one direction we observe that if \NP$\neq$\RP\ then there are functions in \sP,\ that are not in \FPRAS.\ For example, \textsc{\#Is} belongs to \sP,\ and does not admit an fpras unless \NP=\RP\ \cite{DFJ02}.

The other direction derives from a Stockmeyer's well known theorem \cite{Stockmeyer85a}. By Stockmeyer's theorem there exists an fpras, with access to a $\Sigma_2^p$ oracle, for any problem in \sP.\ If \NP=\RP\ then $\Sigma_2^p= \RP^{\RP}\subseteq \BPP$~\cite{zachos88}. Finally it is easy to see that an fpras with access to a \BPP\ oracle, can be replaced by another fpras, that simulates the oracle calls itself.
\end{proof}

\medskip
\begin{corollary}\label{Corollary1}
$\totp\subseteq\FPRAS$ if and only if $\totp\subseteq\FPRASP$ if and only if $\NP=\RP$.
\end{corollary}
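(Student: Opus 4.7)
The plan is to establish the cycle of implications $\NP = \RP \;\Rightarrow\; \totp \subseteq \FPRASP \;\Rightarrow\; \totp \subseteq \FPRAS \;\Rightarrow\; \NP = \RP$, which yields the three-way equivalence. The middle implication is immediate from the definitions, since $\FPRASP \subseteq \FPRAS$, so the real content is concentrated in the first and third implications.

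For the first implication, I would start from Theorem \ref{sP vs fpras}, which under the assumption $\NP = \RP$ furnishes an fpras for every function in \sP, and in particular for every $f \in \totp$. The subtle point is upgrading this to the \FPRASP\ guarantee, namely returning $0$ deterministically whenever $f(x) = 0$. Here the structural inclusion $\totp \subseteq \sPE$ from Theorem \ref{PZmt}(a) is decisive: the support $L_f = \{x : f(x) > 0\}$ is a polynomial-time decidable set, so I would prepend a deterministic test for $x \in L_f$ to the fpras. If the test fails, the modified algorithm outputs $0$ with certainty; otherwise it invokes the fpras. The result satisfies both the relative-error bound and the zero-error clause required for \FPRASP.

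For the third implication, I would exhibit a single witness problem in $\totp$ whose approximability already forces $\NP = \RP$. The natural choice is \textsc{\#2Sat}. It lies in $\totp$ for two equally acceptable reasons: first, it clearly belongs to $\mathsf{\Sigma QSO(\Sigma_2\text{-}2SAT)}$, which Section \ref{sub2.1} shows is a subclass of $\totp$; second, $\textsc{2Sat} \in \cP$ and \textsc{\#2Sat} is self-reducible, so Theorem \ref{PZmt}(b) applies. Under the assumption $\totp \subseteq \FPRAS$, we then obtain an fpras for \textsc{\#2Sat}, and by the classical equivalence with counting independent sets~\cite{DGGJ04} invoked just before this corollary, $\NP = \RP$ follows.

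The only step requiring care is the upgrade from \FPRAS\ to \FPRASP\ in the first implication; this is precisely where the hypothesis $\totp \subseteq \sPE$ does the work, because no analogous deterministic zero-test is available for arbitrary \sP\ functions, which is why this corollary strengthens Theorem \ref{sP vs fpras} for free in the \totp\ setting.
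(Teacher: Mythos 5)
Your proof is correct and follows essentially the same route as the paper: the paper likewise derives $\totp\subseteq\FPRAS \Leftrightarrow \NP=\RP$ from Theorem~\ref{sP vs fpras} together with a hard-to-approximate witness in \totp\ (it uses \textsc{\#Is}, which is interchangeable with your \textsc{\#2Sat}), and upgrades \FPRAS\ to \FPRASP\ by exactly your observation that $L_f$ is polynomial-time decidable for $f\in\totp$. Organizing the argument as a cycle of implications rather than two separate equivalences is only a cosmetic difference.
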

\begin{proof}
$\totp\subseteq\FPRAS$ iff \NP=\RP\ is an immediate corollary of the proof of Theorem~\ref{sP vs fpras} along with the observations that $\textsc{\#Is}\in\totp$ and $\totp\subseteq\sP$.

We prove that $\totp\subseteq\FPRAS$ iff $\totp\subseteq\FPRASP$. Suppose that $\totp\subseteq\FPRAS$ and let $f$ be a function in \totp. Then $f\in\FPRAS$. Now we can modify the fpras for $f$ so that it returns the correct value of $f(x)$ with probability~1 if $f(x)=0$. We can do this since we can decide if $f(x)=0$ in polynomial time. So, $f\in\FPRASP$.

The other direction is trivial by the inclusion $\FPRASP\subseteq\FPRAS$.
\end{proof}

\medskip
Now we examine the opposite inclusion, i.e.\ whether \FPRAS\ is a subset of \totp.\
To this end we introduce two classes that contain counting problems with decision in \RP.

 Recall that if a counting function $f$ admits an fpras, then its decision version, i.e.\ deciding whether $f(x)=0$, is in \BPP. In a similar way, if a counting function belongs to \FPRASP, then its decision version is in \RP. So we need to define the subclass of \sP\ with decision in \RP. Clearly, if for a problem $\Pi$ in \sP\ the corresponding counting machine has an \RP\ behavior (i.e., either a majority of paths are accepting or all paths are rejecting) then the decision version is naturally in \RP. However, this seems to be a quite restrictive requirement. Therefore we will examine two subclasses of \sP. 
 
 For that we need the following definition of the set of Turing Machines associated to problems in \RP. 
 
 \begin{definition}\label{MR}
Let $M$ be an NPTM. We denote by $p_M$ the polynomial such that  on inputs of size $n$, $M$ makes $p_M(n)$ non-deterministic choices.\\
${\cal MR} = \{ M \mid M\text{ is an NPTM and  for all } x\in\Sigma^*$ either $acc_M(x)=0$ or $acc_M > \frac{1}{2}\cdot 2^{p_M(|x|)}\}.$  
\end{definition}

\begin{definition}
\RPo\ = $\{f\in \sP \mid \exists M \in {\cal MR}\, \forall x\in\Sigma^*: f(x) = acc_M(x)\}.$
\end{definition}

\begin{definition}
\RPt\ = $\{f \in \sP \mid L_f \in \RP \}.$
\end{definition}

Note that \RPo, although restrictive, contains counting versions of some of the most representative problems in \RP, for which no deterministic algorithms are known. For example consider the polynomial identity testing  problem (\textsc{Pit}~\footnote{Determining the computational complexity of polynomial identity testing is considered one of the most important open problems in the mathematical field of Algebraic Computing Complexity.}):
Given an arithmetic circuit of degree $d$ that computes a polynomial in a field, determine whether the polynomial is not equal to the zero polynomial. A probabilistic solution to it is to evaluate it on a random point (from a sufficiently large subset $S$ of the field). If the polynomial is zero then all points will be evaluated to $0,$ else the probability of getting $0$ is at most $\frac{d}{|S|}$. A counting analogue of \textsc{Pit} is to count the number of elements in $S$ that evaluate to non-zero values; clearly this problem belongs to \RPo. Another problem in \RPo\ is to count the number of compositeness witnesses (as defined by the Miller-Rabin primality test) on input an integer $n>2$; although in this case the decision problem is in \cP\ (a prime number has no such witnesses and this can be checked deterministically by AKS algorithm~\cite{AKS}), for a composite number $n$ at least half of the integers in $\mathbb{Z}_n$ are Miller-Rabin witnesses, hence there exists a NPTM $M \in {\cal MR}$ that has as many accepting paths as the number of witnesses.

\RPt\ contains natural counting problems as well. Two examples in \RPt\ are  \#\textsc{Exact Matchings} and \#\textsc{Blue-Red Matchings}, which are counting versions of \textsc{Exact Matching}~\cite{PY82} and \textsc{Blue-Red Matching}~\cite{NPZ07}, respectively, both of which belong to \textsf{RP} (in fact in \textsf{RNC}) as shown in~\cite{MVV87,NPZ07}; however, it is still open so far whether they can be solved in polynomial time. Therefore it is also open whether \#\textsc{Exact Matchings} and \#\textsc{Blue-Red Matchings} belong to \totp.

We will now focus on relationships  among the  aforementioned classes. We start by presenting some unconditional inclusions and then we explore possible inclusions under the condition that either $\NP\neq\RP\neq\cP$ or $\NP\neq\RP=\cP$ holds.

The results are summarized in Figures~\ref{fig1} and~\ref{fig2}.  

\begin{figure}[ht]
\centering
\begin{minipage}{0.28\textwidth}
\vfill\centering
\begin{tikzpicture}
\node at (1,4) {\sP};
\draw [thick, -> ] (1,3.3) -- (1,3.7);
\node at (1,3) {\sBPP};
\draw [thick, -> ] (0.5,2.3) -- (0.9,2.7);
\draw [thick, -> ] (1.5,2.3) -- (1.1,2.7);
\node at (0,2) {\RPt};
\draw [thick, ->] (0,1.3) -- (0,1.7);
\node at (2,2) {\FPRAS};
\node at (0,1) {\sPE};
\draw [thick, ->] (2,1.3) -- (2,1.7);
\draw [thick, ->] (1.5,1.3) -- (0.5,1.7);
\node at (2,1) {\FPRASP};
\draw [thick, ->] (0,0.3) -- (0,0.7);
\node at (0,0) {\totp};
\draw [thick, ->] (2,0.3) -- (2,0.7);
\node at (2,0) {\RPo};
\node at (1,-1) {\FP};
\draw [thick, ->] (0.8,-0.7) -- (0.5,-0.3);
\draw [thick, ->] (1.2,-0.7) -- (1.5,-0.3);
\end{tikzpicture} 
\caption{Unconditional inclusions.}
\label{fig1}
\end{minipage}
\hfill
\begin{minipage}{0.6\textwidth}
\centering
\begin{tabular}{|c|c|}
\hline 
\begin{tikzpicture}
\node at (1,4.5) {\textbf{\small{\NP$\neq$\RP$\neq$\cP}}};
\node at (0,3.5) {\sP};
\node at (2,3) {\sBPP};
\draw [thick, -> ] (1.3,3.2) -- (0.6,3.5);
\draw [thick, -> ] (2,2.3) -- (2,2.7);
\node at (0,2) {\RPt};
\draw [thick, ->] (0.7,2.3) -- (1.3,2.8);
\draw [thick, |->] (0,2.5) -- (0,3.1);
\draw [thick, |->] (0,1.3) -- (0,1.7);
\node at (0,1) {\sPE};
\draw [thick, ->] (2,1.3) -- (2,1.7);
\node at (2,2) {\FPRAS};
\draw [thick, -|] (0.7,2) -- (1.3,2);
\draw [thick, |->] (1.4,2.3) -- (0.5,3.2);
\draw [thick, |-|] (0.7,1) -- (1.3,1);
\node at (2,1) {\FPRASP};
\draw [thick, |->] (1.2,1.3) -- (0.6,1.7);
\draw [thick, |->] (0,0.3) -- (0,0.7);
\node at (0,0) {\totp};
\draw [thick, |-|] (0.7,0) -- (1.3,0);
\draw [thick, ->] (2,0.3) -- (2,0.7);
\node at (2,0) {\RPo};
\draw [thick, |-|] (0.7,0.3) -- (1.5,1.7);
\draw [thick, |-|] (1.3,0.3) -- (0.7,0.7);
\node at (1,-1) {\FP};
\draw [thick, |->] (0.8,-0.7) -- (0.5,-0.3);
\draw [thick, |->] (1.2,-0.7) -- (1.5,-0.3);
\end{tikzpicture} 
& 
\begin{tikzpicture}
\node at (1,4.5) {\textbf{\small{\NP$\neq$\RP=\cP}}};
\node at (0,3.5) {\sP};
\draw [thick, |-> ] (0,2.5) -- (0,3.1);
\node at (2,3) {\sBPP};
\draw [thick, -> ] (1.3,3.2) -- (0.6,3.5);
\draw [thick, -> ] (2,2.3) -- (2,2.7);
\node at (0,2) {\sPE=\RPt};
\draw [thick, ->] (0.7,2.3) -- (1.3,2.8);
\draw [thick, -|] (1,2) -- (1.4,2);
\draw [thick, |->] (0,1.3) -- (0,1.7);
\node at (2,2) {\FPRAS};
\draw [thick, |->] (1.4,2.3) -- (0.5,3.2);
\node at (0,1) {\totp};
\draw [thick, -|] (0.7,1.2) -- (1.4,1.7);
\draw [thick, ->] (2,1.3) -- (2,1.7);
\draw [thick, -|] (0.7,1) -- (1.3,1);
\node at (2,1) {\FPRASP};
\draw [thick, |->] (1.3,1.3) -- (0.7,1.7);
\draw [thick, ->] (2,0.3) -- (2,0.7);
\node at (2,0) {\RPo};
\draw [thick, |-] (1.3,0.3) -- (0.7,0.7);
\node at (1,-1) {\FP};
\draw [thick, ->] (1.2,-0.7) -- (1.5,-0.3);
\draw [thick, |->] (0.8,-0.7) -- (0.1,0.7);
\end{tikzpicture} \\ 
\hline 
\end{tabular}
\caption{Conditional inclusions. The following notation is used: 
$A \rightarrow B$ denotes $A \subseteq B$,  $A\dashv B$ denotes $A \not\subseteq B$, and $A \mapsto B$ denotes $A \subsetneq B$.}
\label{fig2}
\end{minipage}
\end{figure}

\subsection{Unconditional inclusions}

\begin{theorem}\label{line of inclusions}
$\FP\subseteq \RPo \subseteq \RPt \subseteq \sP$. Also $\totp\subseteq \sPE\subseteq \RPt$.
\end{theorem}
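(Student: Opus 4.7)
The theorem asserts the chain $\FP \subseteq \RPo \subseteq \RPt \subseteq \sP$ together with the parallel chain $\totp \subseteq \sPE \subseteq \RPt$. My plan is to verify each link in turn. Three of the five are immediate from definitions. First, $\RPt \subseteq \sP$ by the very definition of $\RPt$. Second, $\totp \subseteq \sPE$ is a restatement of Theorem~\ref{PZmt}(a). Third, $\sPE \subseteq \RPt$ follows because $\cP \subseteq \RP$: any $f \in \sPE$ has $L_f \in \cP \subseteq \RP$, hence $f \in \RPt$.

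For $\RPo \subseteq \RPt$, I would take $f \in \RPo$ witnessed by some $M \in \mathcal{MR}$ and exhibit an \RP\ algorithm for $L_f$ by the natural sampling trick. On input $x$, simulate $M$ on $x$ with the $p_M(|x|)$ nondeterministic bits chosen uniformly at random, and accept iff the resulting path is accepting. The defining property of $\mathcal{MR}$ guarantees rejection with probability $1$ when $f(x) = 0$, and acceptance with probability strictly greater than $1/2$ when $f(x) > 0$, so $L_f \in \RP$ and thus $f \in \RPt$.

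For $\FP \subseteq \RPo$, given $f \in \FP$, I would construct an NPTM $M \in \mathcal{MR}$ with $acc_M = f$. On input $x$, $M$ first deterministically computes $f(x)$, which is possible since $f \in \FP$. If $f(x) = 0$, $M$ rejects on all paths. Otherwise, $M$ makes the smallest number $k = k(x)$ of nondeterministic choices such that $f(x) \le 2^k$, guesses a string $v \in \{0,1\}^k$, and accepts iff the integer encoded by $v$ is less than $f(x)$. This yields $acc_M(x) = f(x)$, and the minimality of $k$ forces $2^{k-1} < f(x)$, so the $\mathcal{MR}$-majority condition $acc_M(x) > 2^{p_M(|x|)}/2$ is met. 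The polynomial bound on $k(x)$ follows from $\FP \subseteq \sP$, which guarantees $f(x) \le 2^{p(|x|)}$ for some polynomial $p$.

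The delicate step is the $\FP \subseteq \RPo$ construction: the number of nondeterministic choices must be calibrated per input so as to simultaneously match $f(x)$ exactly and satisfy the majority condition. This requires reading $2^{p_M(|x|)}$ in the definition of $\mathcal{MR}$ as the actual number of computation paths of $M$ on $x$, uniformly bounded by a polynomial in $|x|$, which is the standard counting-complexity convention for NPTMs whose path lengths may depend on the specific input.
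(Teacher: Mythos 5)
Your proof is correct and follows essentially the same route as the paper: the $\FP\subseteq\RPo$ construction (deterministically compute $f(x)$, take the least $k$ with $f(x)\le 2^k$, guess $k$ bits and accept iff the guessed index falls below $f(x)$, with the majority condition forced by $f(x)>2^{k-1}$) is the paper's construction, and the remaining inclusions are dispatched from the definitions exactly as the paper does. You are merely more explicit about details the paper leaves implicit --- the $f(x)=0$ case, the uniform-sampling argument showing $\RPo\subseteq\RPt$, and the reading of $p_M$ as an input-dependent but polynomially bounded number of nondeterministic choices.
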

\begin{proof}
Let $f\in \FP$. We will show that $f\in \RPo$. We will construct an NPTM $M\in {\cal MR}$ s.t. on input $x$, $acc_M(x)=f(x)$. Let $x\in \Sigma^*$. We construct $M$ that computes $f(x)$ and then it computes $i\in \mathbb{N}$ s.t. $f(x)\in (2^{i-1},2^i]$. $M$ makes $i$ non-deterministic choices $b_1,b_2,...,b_i$. Each such $b\in \{0,1\}^i$ determines a path, in particular, $b$ corresponds to the $(b+1)$-st path (since $0^i$ is the first path). $M$ returns yes iff $b+1 \leq f(x)$, so $acc_M=f(x)$. Since $f(x) > 2^{i-1},$ $M\in{\cal MR}.$ 

The other inclusions are immediate by definitions.
\end{proof}

\medskip

\begin{theorem}\label{main theorem}
$\RPo \subseteq \FPRASP \subseteq \RPt.$
\end{theorem}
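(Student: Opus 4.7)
The plan is to prove the two inclusions separately using standard sampling arguments, exploiting the gap property built into the definition of $\mathcal{MR}$ for the first inclusion and the zero-error guarantee of $\FPRASP$ for the second.

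For $\RPo \subseteq \FPRASP$, let $f\in\RPo$ and fix $M\in\mathcal{MR}$ with $acc_M = f$. Set $N=2^{p_M(|x|)}$ and let $q=f(x)/N$, so by the $\mathcal{MR}$-condition either $q=0$ or $q\geq \tfrac12$. The fpras I would give samples $t$ independent uniformly random binary strings of length $p_M(|x|)$ (i.e.\ random computation paths of $M$), simulates $M$ on each, and lets $\hat q$ be the fraction of accepting samples; it then outputs $\hat q \cdot N$. If every sample rejects, it outputs exactly $0$. By a Chernoff bound, for $t = \Theta(\epsilon^{-2}\log\delta^{-1})$ samples the estimate satisfies $(1-\epsilon)q \leq \hat q \leq (1+\epsilon)q$ with probability at least $1-\delta$ whenever $q\geq\tfrac12$, giving the required multiplicative approximation of $f(x)$ in time polynomial in $|x|$, $\epsilon^{-1}$, and $\log\delta^{-1}$. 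When $f(x)=0$ no path accepts, so $\hat q=0$ deterministically and the output is $0$ with probability $1$; this is exactly the $\FPRASP$ zero-error condition.

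For $\FPRASP \subseteq \RPt$, let $f\in\FPRASP$. We already have $f\in\sP$, so it remains to place $L_f$ in $\RP$. I would invoke the fpras for $f$ with fixed parameters, say $\epsilon=\tfrac12$ and $\delta=\tfrac13$, accepting iff the returned value $\widehat{f(x)}$ is strictly positive. If $x\notin L_f$ then $f(x)=0$, and by the \FPRASP-property $\widehat{f(x)}=0$ with probability $1$, so the machine rejects with certainty. If $x\in L_f$ then $f(x)\geq 1$; with probability at least $\tfrac23$ we have $\widehat{f(x)}\geq (1-\epsilon)f(x)\geq \tfrac12 > 0$, so the machine accepts. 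This meets the \RP\ criteria, so $L_f\in\RP$ and therefore $f\in\RPt$.

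I do not expect either direction to be a serious obstacle: the main subtlety is simply making sure the sampling-based estimator in the first inclusion is phrased so that the returned value is representable in polynomial size (output the rational $kN/t$ with $k$ the accepting count, or round safely within the multiplicative slack) and that the $\hat q=0$ case is handled deterministically to preserve the $\FPRASP$ zero-error property. Both points are routine once the gap $q\in\{0\}\cup[\tfrac12,1]$ is recognized as the key structural input from the definition of $\mathcal{MR}$.
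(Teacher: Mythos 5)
Your proposal is correct and follows essentially the same route as the paper: a uniform sampling estimator exploiting the $q\in\{0\}\cup[\tfrac12,1]$ gap from the $\mathcal{MR}$ definition for the first inclusion (the paper packages the Chernoff-type bound as an ``unbiased estimator'' lemma), and running the fpras with fixed constants and thresholding the output for the second. The only cosmetic difference is your acceptance threshold $\widehat{f(x)}>0$ versus the paper's $\widehat{f(x)}\geq\tfrac12$; both are valid given the zero-error guarantee of $\FPRASP$.
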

\begin{proof}
For the first inclusion, let $\epsilon > 0, \delta >0.$ Let $f\in \RPo$. There exists an $M_f\in {\cal MR}$ s.t. $\forall x$, $acc_{M_f}(x)=f(x).$ Let $q(|x|)$ be the number of non-deterministic choices of $M_f.$ Let $p=\frac{f(x)}{2^{q(|x|)}}$. We can compute an estimate $\hat{p}$ of $p,$ by choosing $m=poly(\epsilon^{-1}, \log \delta^{-1})$ paths uniformly at random. Then we can compute $\widehat{f(x)}=\hat{p}\cdot 2^{q(|x|)}$.

To proceed with the proof we need the following lemma.
\begin{lemma}\label{unbiased estimator}
(Unbiased estimator.) Let $A\subseteq B$ be two finite sets, and let $p=\frac{|A|}{|B|}$. Suppose we take $m$ samples from $B$ uniformly at random, and let a be the number of them that belong to $A$. Then $\hat{p}=\frac{a}{m}$ is an unbiased estimator of $p$, and it suffices $m=poly(p^{-1},\epsilon^{-1},\log \delta^{-1})$ in order to have 
\[\Pr[ (1-\epsilon) p \leq  \hat{p}\leq (1+\epsilon) p] \geq 1-\delta.\]   
\end{lemma}

If $f(x)\neq 0$, then $p>\frac{1}{2},$ so by the unbiased estimator of lemma \ref{unbiased estimator}, $\widehat{f(x)}$ satisfies the definition of fpras. If $f(x)=0$ then $\widehat{f(x)}=0$, so the estimated value is $0$ with probability~1.

For the second inclusion, let $f\in \FPRASP$,\ we will show that the decision version of $f$, i.e. deciding if $f(x)=0$, is in \RP.\ On input $x$ we run the fpras for $f$ with e.g. $\epsilon=\delta=\frac{1}{4}.$ We return yes iff $\widehat{f(x)}\geq \frac{1}{2}.$ 

By the definition of $\FPRASP$, if $f(x)=0$ then the fpras returns $0$, so we return yes with probability $0$. If $f(x)\geq 1$, then $\widehat{f(x)}\geq \frac{1}{2}$ with probability at least $1-\delta$, so we return yes with the same probability.
\end{proof}
\medskip

\begin{corollary}\label{Corollary1.5}
$\RPo\subseteq \FPRASP\subseteq\FPRAS\subseteq\sBPP$.
\end{corollary}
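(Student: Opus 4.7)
The plan is to verify each of the three inclusions in $\RPo \subseteq \FPRASP \subseteq \FPRAS \subseteq \sBPP$ separately; two are essentially free and the third is a short reduction of approximation to decision. The leftmost inclusion $\RPo \subseteq \FPRASP$ is exactly the first half of Theorem~\ref{main theorem} and requires nothing new. The middle inclusion $\FPRASP \subseteq \FPRAS$ is immediate from the definition, since \FPRASP\ was introduced as a syntactic subclass of \FPRAS, obtained by strengthening the approximation guarantee on zero-valued inputs.

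For the third inclusion $\FPRAS \subseteq \sBPP$, I would take $f \in \FPRAS$, observe that $f \in \sP$ by definition, and show that $L_f = \{x \mid f(x) > 0\}$ lies in $\BPP$ by using the fpras itself as a decider. On input $x$, run the fpras with fixed constants $\epsilon = 1/2$ and $\delta = 1/3$, and accept iff the returned estimate $\widehat{f(x)}$ is at least $1/2$. If $f(x) \geq 1$, then the multiplicative-error guarantee yields $\widehat{f(x)} \geq (1-\epsilon)f(x) = f(x)/2 \geq 1/2$ with probability at least $2/3$, so we accept with high probability. If $f(x) = 0$, the multiplicative-error interval $[(1-\epsilon)f(x),(1+\epsilon)f(x)]$ collapses to $\{0\}$, so the \FPRAS\ definition already forces $\widehat{f(x)} = 0$ with probability at least $2/3$, and we reject. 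Both error probabilities are bounded by $1/3$, so $L_f \in \BPP$ and hence $f \in \sBPP$.

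No step poses a real obstacle. The only minor point worth emphasizing is that the $f(x)=0$ branch of the decider argument does not require the \FPRASP\ zero-error condition, because the multiplicative-error inequality in the definition of \FPRAS\ is already forcing $\widehat{f(x)} = 0$ with probability at least $1-\delta$ in that case; this is precisely what allows \FPRAS\ itself, and not only \FPRASP, to be placed inside \sBPP. Combined with the already-established inclusion $\RPo \subseteq \FPRASP$ from Theorem~\ref{main theorem}, this yields the claimed chain.
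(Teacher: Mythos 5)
Your proposal is correct and matches the paper's (implicit) reasoning: the paper states this corollary without a separate proof, relying on the first inclusion of Theorem~\ref{main theorem}, the definitional inclusion $\FPRASP\subseteq\FPRAS$, and the standard fact that an fpras yields a \BPP\ decider for $L_f$. Your decider for the third inclusion is exactly the two-sided analogue of the argument the paper itself uses to show $\FPRASP\subseteq\RPt$ (run the fpras with constant $\epsilon,\delta$ and threshold at $1/2$), so nothing is missing.
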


\begin{corollary}\label{Corollary2}
If $\FPRAS \subseteq \totp$ then \RP=\cP.
\end{corollary}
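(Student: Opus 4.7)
The plan is to chain together the inclusions established earlier in the paper: $\RPo \subseteq \FPRASP \subseteq \FPRAS$ (Corollary~\ref{Corollary1.5}) and $\totp \subseteq \sPE$ (Theorem~\ref{PZmt}(a)), and then use the fact that $\sPE$ has decision version in $\cP$ to force every \RP\ language into $\cP$.

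More concretely, assume $\FPRAS \subseteq \totp$. First I would observe that by Corollary~\ref{Corollary1.5} we have $\RPo \subseteq \FPRAS$, so the hypothesis gives $\RPo \subseteq \totp$. Composing with Theorem~\ref{PZmt}(a) yields $\RPo \subseteq \sPE$, which means that for every $f \in \RPo$ the set $L_f = \{x : f(x) > 0\}$ lies in $\cP$.

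Next I would translate this statement about counting functions into one about languages. Given an arbitrary $L \in \RP$, the definition of $\RP$ provides an NPTM $M$ such that $x \in L$ implies $acc_M(x) > \tfrac{1}{2} \cdot 2^{p_M(|x|)}$ and $x \notin L$ implies $acc_M(x) = 0$. This $M$ is precisely a machine in the class $\mathcal{MR}$ from Definition~\ref{MR}, so $acc_M \in \RPo$ and moreover $L_{acc_M} = L$. By the previous paragraph $L \in \cP$, and since $L \in \RP$ was arbitrary we conclude $\RP \subseteq \cP$. The reverse inclusion $\cP \subseteq \RP$ is immediate, giving $\RP = \cP$.

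There is no real obstacle here: the proof is essentially a routine chase through the inclusions already proved, once one notices that $\RPo$ is defined in exactly the shape needed to encode an arbitrary $\RP$ language as the support of a counting function. The only point worth double-checking is that the NPTM witnessing $L \in \RP$ can be taken in $\mathcal{MR}$ without changing the language it decides, which follows directly from the standard definition of $\RP$ and matches Definition~\ref{MR} verbatim.
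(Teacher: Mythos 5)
Your proof is correct and follows essentially the same route as the paper's: both pass from $\FPRAS \subseteq \totp$ to $\RPo \subseteq \totp$ via $\RPo \subseteq \FPRASP \subseteq \FPRAS$, use the fact that \totp\ functions have decision version in \cP, and then encode an arbitrary $\RP$ language as $L_{acc_M}$ for a machine $M \in \mathcal{MR}$. The only difference is that you route the ``decision version in \cP'' step explicitly through $\totp \subseteq \sPE$, which the paper leaves implicit.
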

\begin{proof}
If \FPRAS\ $\subseteq$ \totp,\ then \RPo\ $\subseteq$ \totp,\ and then for all $f \in \RPo$, $L_f\in$ \cP. So if $A\in \RP$ via $M\in \mathcal{MR}$ then $\#acc_M\in$ \RPo, and thus $A=L_{\#acc_M}\in$ \cP. Thus \RP=\cP.
\end{proof}
\medskip

\begin{corollary}\label{Corollary3}
If $\RPo=\RPt$\ then\ \NP=\RP.
\end{corollary}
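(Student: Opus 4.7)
The plan is to assume $\RPo=\RPt$ and deduce $\NP\subseteq\RP$ (the reverse inclusion is trivial). The leverage comes from the unconditional inclusion $\sPE\subseteq\RPt$, which holds because $\cP\subseteq\RP$; the hypothesis therefore upgrades this to $\sPE\subseteq\RPo$, so every $\sPE$-function must admit a witnessing NPTM that is balanced on every input. This last requirement turns out to be highly restrictive and is the engine of the argument.

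Fix an arbitrary $A\in\NP$, let $N$ be an NPTM for $A$ with $p(|x|)$ nondeterministic choices, and introduce the auxiliary function $f(x)=acc_N(x)+1$. A routine construction (one extra fixed accepting path placed in parallel with the paths of $N$) shows that $f\in\sP$, and since $f(x)\ge 1$ for every $x$ we have $L_f=\Sigma^{*}\in\cP$, so $f\in\sPE\subseteq\RPt=\RPo$. Hence there exists $M\in{\cal MR}$ with $acc_M(x)=f(x)$ for every $x$; let $p_M$ be the polynomial bounding the number of nondeterministic choices of $M$.

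I would then exploit the ${\cal MR}$-balance condition pointwise. Since $f(x)\ge 1$, the definition of ${\cal MR}$ forces $acc_M(x)>2^{p_M(|x|)-1}$ for every $x$. If $x\notin A$ then $acc_M(x)=1$, so $1>2^{p_M(|x|)-1}$ forces $p_M(|x|)=0$; if $x\in A$ then the bounds $2\le acc_M(x)\le 2^{p_M(|x|)}$ force $p_M(|x|)\ge 1$. Since $p_M(n)$ depends only on $n$, no input length can contain both a yes- and a no-instance of $A$, so membership in $A$ is determined by $|x|$ alone. Because $p_M$ is a polynomial, it either vanishes identically (giving $A=\emptyset$) or has only finitely many zeros; in both cases $A$ is decided in deterministic polynomial time by simply evaluating $p_M(|x|)$ and comparing to zero. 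Thus $A\in\cP$, and since $A$ was an arbitrary $\NP$-language we conclude $\NP\subseteq\cP\subseteq\RP$, hence $\NP=\RP$. The main conceptual step is recognising that the innocuous padding $acc_N+1$ already suffices to trap $A$ inside an $\sPE$-function whose $\RPo$-representation is essentially rigid; once that is in place, the polynomial nature of $p_M$ closes the argument without any further work.
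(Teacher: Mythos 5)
Your argument pivots entirely on reading Definition~\ref{MR} literally, namely that $M$ makes exactly $p_M(|x|)$ non-deterministic choices --- a function of the input \emph{length} alone --- so that $M$ has exactly $2^{p_M(|x|)}$ paths on every input of a given length. Under that reading your deduction is formally sound, but it proves far too much, which is the tell that something is wrong: applied to $A=\{0\}\in\cP\subseteq\NP$, whose length-$1$ slice contains both a yes-instance and a no-instance, your argument derives an outright contradiction from the hypothesis, i.e.\ it shows $\RPo\neq\RPt$ \emph{unconditionally}; the very same reasoning shows $\FP\not\subseteq\RPo$ (take $f(0)=2$ and $f(x)=1$ otherwise: the no-instance forces $p_M(1)=0$, leaving only one path for the input $0$), contradicting Theorem~\ref{line of inclusions}. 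This tells you the literal reading is not the operative one: in the paper's own proof that $\FP\subseteq\RPo$, the constructed machine makes a number $i$ of choices depending on $f(x)$, not just on $|x|$, so the ${\cal MR}$ condition has to be understood as ``more than half of the computation paths of $M$ on input $x$ accept,'' with the number of paths allowed to vary with $x$. Under that reading your key step --- ``no input length can contain both a yes- and a no-instance of $A$'' --- evaporates: a no-instance of length $n$ can be served by a single accepting path while a yes-instance of the same length uses many paths, and nothing forces $f=acc_N+1$ out of $\RPo$. Note also that even on your own terms the final inference is misdrawn: from ``every $A\in\NP$ is determined by its input length'' the correct conclusion is that the hypothesis $\RPo=\RPt$ is contradictory (since $\{0\}$ is not length-determined), not that $\NP\subseteq\cP$; the corollary would then hold only vacuously.

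The paper's proof is a short sandwich argument that is immune to this definitional subtlety and uses only already-established inclusions: by Theorem~\ref{main theorem}, $\RPo\subseteq\FPRASP\subseteq\RPt$, so the hypothesis $\RPo=\RPt$ forces all three classes to coincide; combined with $\totp\subseteq\sPE\subseteq\RPt$ (Theorem~\ref{line of inclusions}) this yields $\totp\subseteq\FPRASP\subseteq\FPRAS$, and hence $\NP=\RP$ by Corollary~\ref{Corollary1}. You should rebuild your argument along these lines rather than on the fine structure of ${\cal MR}$.
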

\begin{proof} 
If \RPo=\RPt\ then they are both equal to \FPRASP,\ thus \totp\ $\subseteq$ \FPRASP$\subseteq$ \FPRAS. Therefore, \NP=\RP\ by Corollary \ref{Corollary1}.
\end{proof}

\medskip

Theorems~\ref{line of inclusions} and~\ref{main theorem} together with Theorem~\ref{PZmt} are summarised in Figure~\ref{fig1}.

\subsection{Conditional inclusions / Possible worlds}

Now we will explore further relationships between the above mentioned classes, and we will present two possible worlds inside \sP, with respect to \NP\ vs. \RP\ vs. \cP.

\begin{theorem}\label{four worlds}
The inclusions depicted in Figure~\ref{fig2} hold under the corresponding assumptions on top of each subfigure. 
\end{theorem}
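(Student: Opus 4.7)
The plan is to verify Figure~\ref{fig2} by case analysis on the two conditional worlds, reducing each individual relation to the unconditional results already proved (Theorems~\ref{PZmt},~\ref{sP vs fpras},~\ref{line of inclusions},~\ref{main theorem}) together with Corollaries~\ref{Corollary1},~\ref{Corollary2} and~\ref{Corollary3}. I would first observe that every plain-arrow inclusion depicted in Figure~\ref{fig2} is already present in Figure~\ref{fig1}, hence holds unconditionally; the remaining work is to upgrade each of them to a proper inclusion ($\mapsto$) or to rule one out ($\dashv$) under the stated hypotheses on $\NP$, $\RP$ and $\cP$.

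The non-inclusion relations I would dispatch with three principal tools. First, by Corollary~\ref{Corollary1}, under $\NP\ne\RP$ we have $\totp\nsubseteq\FPRAS$ and $\totp\nsubseteq\FPRASP$, with $\textsc{\#Is}\in\totp\setminus\FPRAS$ serving as a uniform witness; combined with the inclusion $\RPo\subseteq\FPRASP$ this also yields $\totp\nsubseteq\RPo$. Second, by Corollary~\ref{Corollary2}, under $\RP\ne\cP$ we obtain $\FPRAS\nsubseteq\totp$, and inspecting its proof shows that the separating function can in fact be chosen inside $\RPo$, so $\RPo\nsubseteq\totp$ and $\FPRASP\nsubseteq\totp$ follow as well. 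Third, Corollary~\ref{Corollary3} directly gives $\RPo\subsetneq\RPt$ under $\NP\ne\RP$. The proper inclusions along the chain $\totp\subseteq\sPE\subseteq\sP$ are then immediate from Theorem~\ref{PZmt}, since $\NP\ne\RP$ implies $\NP\ne\cP$, and $\FPRAS\subsetneq\sP$ comes from Theorem~\ref{sP vs fpras}. In world~1, $\sPE\subsetneq\RPt$ and $\RPo\nsubseteq\sPE$ are obtained from a single witness: any $L\in\RP\setminus\cP$ admits a machine $M\in\mathcal{MR}$ whose counting function lies in $\RPo\cap\RPt$ but not in $\sPE$. For world~2 the only additional ingredient is the collapse $\sPE=\RPt$, which is immediate from the definitions once $\RP=\cP$.

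The main obstacle I anticipate is bookkeeping: although each relation reduces mechanically to one of the tools above, one has to select concrete separating functions for every $\dashv$ and $\mapsto$ arrow and verify that the same witness can simultaneously justify several of them. In particular, the proper inclusion $\FPRAS\subsetneq\sBPP$ and, in world~2, the separations near $\RPo$ (where the $\RP\setminus\cP$ witness used in world~1 is no longer available) need care and must be handled through $\NP\ne\RP$-based witnesses such as $\textsc{\#Is}$. The delicate step is therefore to isolate, within each world, exactly which separations survive and which collapse, so that no arrow in Figure~\ref{fig2} is either over-claimed or left unjustified.
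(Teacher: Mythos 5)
Your proposal is correct and follows essentially the same route as the paper: a case analysis over the two worlds in which every depicted relation is reduced to Theorems~\ref{PZmt}, \ref{sP vs fpras}, \ref{line of inclusions}, \ref{main theorem} and Corollaries~\ref{Corollary1}--\ref{Corollary3}, with the remaining arrows following by monotonicity from the same witnesses (\textsc{\#Is} for the $\NP\neq\RP$ separations, an $\RP\setminus\cP$ language for the $\RP\neq\cP$ ones). The only (harmless) discrepancy is that Figure~\ref{fig2} asserts $\FPRAS\subseteq\sBPP$ with a plain arrow rather than a proper inclusion, so the step you flag as delicate is not actually required.
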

\begin{proof} 
First note that intersections between any of the above classes are non-empty, because \FP\ is a subclass of all of them.
For the rest of the inclusions, we have the following.
\begin{itemize}
\item In the case of $\NP\neq\RP=\cP$.
\begin{itemize}
\item \vspace{3mm} By definitions, \sP\ $\subseteq$\RPt\ $\Leftrightarrow $ \NP=\RP. Therefore, $$\NP\neq\RP\Rightarrow \sP\not\subseteq\RPt.$$
\item By Theorem~\ref{PZmt}, the inclusions \FP\ $\subseteq$ \totp\ $\subseteq \sPE\ \subseteq \sP$ are proper unless $\cP=\NP$. Therefore,
$$\NP\neq\cP\Rightarrow \FP \subsetneq \totp \subsetneq \sPE\subsetneq\sP.$$
\item By Corollary~\ref{Corollary1}, \totp\ $\subseteq$ \FPRAS\ $\Rightarrow$ \NP=\RP. Therefore,
$$\NP\neq\RP\Rightarrow \totp\not\subseteq\FPRAS.$$
\item By Corollary~\ref{Corollary1}, \totp\ $\subseteq$ \FPRASP\ $\Rightarrow$ \NP=\RP. Therefore,
$$\NP\neq\RP\Rightarrow \totp\not\subseteq\FPRASP.$$
\item By Corollary~\ref{Corollary1} and Theorem~\ref{main theorem},
\RPt\ $\subseteq$ \FPRAS\ $\Rightarrow$ \totp\ $\subseteq$ \FPRAS\ $\Rightarrow$ \NP=\RP. Therefore,
$$\NP\neq\RP\Rightarrow\RPt\not\subseteq\FPRAS.$$
\item By Theorem~\ref{main theorem} and Corollary~\ref{Corollary1},
\RPt\ $\subseteq$ \FPRASP\ $\Rightarrow$ \totp\ $\subseteq$ \FPRASP\ $\Rightarrow$ \NP=\RP. Therefore,
$$\NP\neq\RP\Rightarrow\RPt\not\subseteq\FPRASP.$$
\item By Corollary~\ref{Corollary3}, \RPt=\RPo\ $\Rightarrow$ \NP=\RP. Therefore,
$$\NP\neq\RP\Rightarrow\RPt\not\subseteq\RPo.$$
\item By Theorem~\ref{sP vs fpras},
\sP\ $\subseteq$ \FPRAS\ $\Leftrightarrow$ \NP=\RP. Therefore,
$$\NP\neq\RP\Rightarrow\sP\not\subseteq\FPRAS.$$
\item By Theorem~\ref{PZmt} and Corollary~\ref{Corollary1},
\sPE\ $\subseteq$ \FPRAS\ $\Rightarrow$ \totp\ $\subseteq$ \FPRAS\ $\Rightarrow$ \NP=\RP. Therefore,
$$\NP\neq\RP\Rightarrow\sPE\not\subseteq\FPRAS.$$
\item By Theorem~\ref{main theorem} and the previous result,
\sPE\ $\subseteq$ \RPo\ $\Rightarrow$ \sPE\ $\subseteq$ \FPRAS\  $\Rightarrow$ \NP=\RP. Therefore,
$$\NP\neq\RP\Rightarrow\sPE\not\subseteq\RPo.$$
\item  By Theorem~\ref{PZmt} and Corollary~\ref{Corollary1},
\sPE\ $\subseteq$ \FPRASP\ $\Rightarrow$ \totp\ $\subseteq$ \FPRASP\ $\Rightarrow$ \NP=\RP. Therefore,
$$\NP\neq\RP\Rightarrow\sPE\not\subseteq\FPRASP.$$
\item By Corollary~\ref{Corollary1} and Theorem~\ref{main theorem},
\totp\ $\subseteq$ \RPo $\Rightarrow$ \totp\ $\subseteq$ \FPRAS\ $\Rightarrow$ \NP=\RP. Therefore,
$$\NP\neq\RP\Rightarrow\totp\not\subseteq\RPo.$$
\end{itemize}
\item \vspace{3mm}In the case of $\NP\neq\RP\neq\cP$.\\
In addition to all the above results we have the following ones.
\begin{itemize}
\item \vspace{2mm} By definitions, \RPt\ $\subseteq$ \sPE\ $\Leftrightarrow$ \cP=\RP. Therefore, $$\cP\neq\RP\Rightarrow \RPt\not\subseteq\sPE.$$
\item As in the proof of Corollary~\ref{Corollary2} we can show that
\RPo\ $\subseteq$ \sPE\ $\Rightarrow$ \cP=\RP\ holds. Therefore,
$$\cP\neq\RP\Rightarrow\RPo\not\subseteq\sPE.$$
\item By Theorem~\ref{main theorem} and the previous result,
\FPRAS\ $\subseteq$ \sPE\ $\Rightarrow$ \RPo\ $\subseteq$ \sPE\ $\Rightarrow$ \cP=\RP. Therefore,
$$\cP\neq\RP\Rightarrow\FPRAS\not\subseteq\sPE.$$
\item Similarly, \FPRASP\ $\subseteq$ \sPE\ $\Rightarrow$ \RPo\ $\subseteq$ \sPE\ $\Rightarrow$ \cP=\RP. Therefore,
$$\cP\neq\RP\Rightarrow\FPRASP\not\subseteq\sPE.$$
\item Similarly, 
\RPo\ $\subseteq$ \totp\ $\Rightarrow$ \cP=\RP. Therefore,
$$\cP\neq\RP\Rightarrow\RPo\not\subseteq\totp.$$
\item By Theorem~\ref{PZmt} and the previous result,
\RPo\ $\subseteq$ \FP\ $\Rightarrow$ \RPo\ $\subseteq$ \totp\ $\Rightarrow$ \cP=\RP. Therefore,
$$\cP\neq\RP\Rightarrow\RPo\not\subseteq\FP.$$
\item By Corollary~\ref{Corollary2},
\FPRAS\ $\subseteq$ \totp\ $\Rightarrow$ \cP=\RP. Therefore,
$$\cP\neq\RP\Rightarrow\FPRAS\not\subseteq \totp.$$
\item Similarly, \FPRASP\ $\subseteq$ \totp\ $\Rightarrow$ \cP=\RP. Therefore,
$$\cP\neq\RP\Rightarrow\FPRASP\not\subseteq \totp.$$
\end{itemize}
\end{itemize}
\end{proof}

\section{Conclusions and open questions}\label{discussion}

\begin{wrapfigure}{r} {0.5\textwidth}
	\vspace{-20pt}
	\centering
\begin{tikzpicture}
\node at (1,3) {\sP};
\draw [thick, |-> ] (1.7,2.3) -- (1.3,2.7);
\node at (-0.5,2) {\sPE=\RPt};
\draw [thick, -|] (0.7,2) -- (1.2,2);
\draw [thick, |->] (0,2.4) -- (0.6,2.8);
\node at (0,1) {\totp};
\draw [thick, -|] (0.7,1.3) -- (1.4,1.7);
\draw [thick, |->] (0,1.3) -- (0,1.7);
\draw [thick, -|] (0.7,1) -- (2.4,1);
\node at (2,2) {\FPRAS};
\node at (3,1) {\RPo};
\draw [thick, ->] (2.7,1.2) -- (2.3,1.6);
\draw [thick, -|] (2,0.3) -- (2,1.7);
\node at (2,0) {$\mathsf{\Sigma QSO(\Sigma_2\text{-}2SAT)}$};
\draw [thick, ->] (1.3,0.3) -- (0.7,0.7);
\node at (-1,0) {$\#\mathsf{\Pi_2\text{-}1VAR}$};
\draw [thick, ->] (-0.6,0.3) -- (-0.1,0.7);
\draw [thick, -|] (-0.1,0.4) -- (1.7,1.5);
\end{tikzpicture}
\caption{Inclusions and separations in the case of $\NP\neq\RP=\cP$.}

	\vspace{-10pt}
	\label{fig4}
\end{wrapfigure}

Regarding the question of whether \FPRAS\ is a subset of \totp, Corollary~\ref{Corollary2} states that if it actually  holds, then  proving it is at least as difficult as proving \RP=\cP. 
 
A long-sought structural characterization for \FPRAS\ might be obtained by exploring the fact that it lies between \RPo\ and \sBPP. 

Another open question is whether \FPRASP\ is included in \RPo. It seems that both a negative and a positive answer are compatible with our two possible worlds. 

\begin{figure}[ht]
\centering
\vfill\centering
\begin{tikzpicture}
\node at (1,5) {\sP};
\node at (-1.2,5) {\sPE=\RPt};
\draw [thick, -|] (-0.7,4.6) -- (0.7,3.5);
\draw [thick, |-> ] (-0.1,5) -- (0.6,5);
\node at (1,3) {\FPRAS};
\draw [thick, |->] (1,3.4) -- (1,4.7);
\node at (-1,3) {\totp};
\draw [thick, -|] (-0.5,3) -- (0.3,3);
\draw [thick, |->] (-1,3.4) -- (-1,4.6);
\node at (-4,3) {$\mathsf{\Sigma QSO(\Sigma_2\text{-}HORN)}$};
\draw [thick, ->] (-2.5,3) -- (-1.5,3);
\node at (-2.5,1) {$\mathsf{spanL}$};
\draw [thick, ->] (-2.7,1.3) -- (-1.3,2.5);
\draw [thick, ->] (-2,1.3) -- (0.8,2.5);
\node at (0,1) {$\mathsf{\#R\Sigma_2}$};
\draw [thick, ->] (-0.3,1.3) -- (-0.8,2.5);
\draw [thick, ->] (0.3,1.3) -- (1,2.5);
\node at (3,4) {$\mathsf{\Sigma QSO(\Sigma_2\text{-}2SAT)}$};
\draw [thick, ->] (1.5,4) -- (-0.5,3.4);
\draw [thick, -|] (1.8,3.7) -- (1.5,3.3);
\node at (3,1) {$\mathsf{\Sigma QSO(\Sigma_1[FO])}$};
\draw [thick, ->] (2.3,1.5) -- (-0.3,2.5);
\draw [thick, ->] (3,1.5) -- (1.8,2.5);
\node at (-2.5,0) {\FP};
\draw [thick, ->] (-2.5,0.3) -- (-2.5,0.7);
\node at (-2.5,-1) {$\#\mathsf{\Sigma_0}$};
\draw [thick, |->] (-1.5,-1) -- (2,-1);
\draw [thick, ->] (-2.5,-0.7) -- (-2.5,-0.3);
\node at (3,0) {$\mathsf{\Sigma QSO(\Sigma_1)}$};
\draw [thick, ->] (3,0.3) -- (3,0.7);
\node at (3,-1) {$\#\mathsf{\Sigma_1}$};
\draw [thick, |->] (3,-0.7) -- (3,-0.3);
\node at (3.6,3) {$\mathsf{\#RH\Pi_1}$};
\draw [thick, ->] (3.6,3.3) -- (3.6,3.7);
\end{tikzpicture} 
\vspace{2mm}
\caption{Inclusions and separations in the case of $\NP\neq\RP=\cP$.}
\label{fig5}
\end{figure}

By employing descriptive complexity methods we obtained two new robust subclasses of \totp; the class $\mathsf{\Sigma QSO(\Sigma_2\text{-}2SAT)}$ for which the counting problem \#\textsc{Disj2Sat} is complete under parsimonious reductions  and the class $\#\mathsf{\Pi_2\text{-}1VAR}$ for which \textsc{\#MonotoneSat} is complete under product reductions. We do not expect $\mathsf{\Sigma QSO(\Sigma_2\text{-}2SAT)}$ to be a subclass of \FPRAS, given that 
\#\textsc{Disj2Sat} does not admit an fpras unless $\NP=\RP$. 

A similar fact holds for the second  class $\#\mathsf{\Pi_2\text{-}1VAR}$. Since there is no fpras for \textsc{\#MonotoneSat} if a variable can appear in $6$ clauses, unless $\NP=\RP$~\cite{LL15}, we do not expect that $\#\mathsf{\Pi_2\text{-}1VAR}$ is a subclass of \FPRAS.

Although proving \textsc{\#MonotoneSat} complete for $\#\mathsf{\Pi_2\text{-}1VAR}$ under product reductions, allows a more precise classification of the problem within \sP, the question of~\cite{HHKW07} remains open, i.e.\  whether \textsc{\#MonotoneSat} is complete for some counting class under reductions under which the class is downwards closed.

Finally, assuming $\NP\neq\RP=\cP$, which is the most widely believed conjecture, the relationships among the classes studied in this paper are given in Figure~\ref{fig4}.

Relationships among \totp, \FPRAS, and various classes defined through descriptive complexity, are shown in Figure~\ref{fig5}.

\medskip

\bibliographystyle{plain}
\bibliography{tamc-42-bibliography}


\end{document}